\documentclass[lettersize,journal]{IEEEtran}
\usepackage[ruled,linesnumbered,lined,noend]{algorithm2e}
\usepackage[switch]{lineno}
\usepackage{cite}
\usepackage{graphicx}
\usepackage{subfigure}
\usepackage{pifont} 
\usepackage{url}
\usepackage{color}
\usepackage{booktabs}
\usepackage{multirow}
\usepackage{latexsym,bm,amsmath,amssymb}
\usepackage{array}
\usepackage{algpseudocode}
\usepackage{lscape}
\usepackage{comment}
\usepackage{balance}
\usepackage[colorlinks,linkcolor = blue,citecolor = blue,citecolor = blue,filecolor = blue,urlcolor=blue]{hyperref}
\usepackage{tcolorbox}
\usepackage{ragged2e}
\usepackage{listings}
\usepackage{tabularx}
\usepackage{amsthm}
\usepackage{colortbl}
\usepackage{xcolor}

\definecolor{lightgray}{gray}{0.9}

\newcommand{\tool}{\textsf{SCD}}

\definecolor{lightgreen}{RGB}{232, 245, 233}   
\definecolor{mediumgreen}{RGB}{165, 214, 167}   
\definecolor{darkgreen}{RGB}{129, 199, 132}      

\newtheorem{theorem}{Theorem}

\newtheorem{definition}{Definition}

\newtheorem{hypothesis}{Hypothesis}

\usepackage{listings}
\definecolor{light-gray}{gray}{0.85}

\definecolor{mygray}{gray}{.9}

\begin{document}

\title{Semantic Consensus Decoding: Backdoor Defense for Verilog Code Generation}

\author{Guang Yang, Xing Hu\IEEEauthorrefmark{1}\thanks{Corresponding author: Xing Hu.}, Xiang Chen, and Xin Xia

\IEEEcompsocitemizethanks{
\IEEEcompsocthanksitem Guang Yang is with the State Key Laboratory of Blockchain and Data Security, Zhejiang University, Hangzhou, China, and also with the Hangzhou High-Tech Zone (Binjiang) Institute of Blockchain and Data Security, Hangzhou, China. 
Xing Hu and Xin Xia are with the State Key Laboratory of Blockchain and Data Security, Zhejiang University, Hangzhou, China. 
Xiang Chen is with the School of Artificial Intelligence and Computer Science, Nantong University, Nantong, China. 
E-mail: novelyg@outlook.com, xinghu@zju.edu.cn, xin.xia@acm.org, xchencs@ntu.edu.cn.
}

\thanks{Manuscript received April 19, 2020; revised August xx, xxxx.}}

\markboth{IEEE TRANSACTIONS ON Software Engineering,~Vol.~XX, No.~XX, XX~2026}%
{Semantic Consensus Decoding: Backdoor Defense for Verilog Code Generation}

\IEEEtitleabstractindextext{
\begin{abstract}
\justifying
Large language models (LLMs) for Verilog code generation are increasingly adopted in hardware design, yet remain vulnerable to backdoor attacks where adversaries inject malicious triggers during training to induce vulnerable hardware designs. 
Unlike patchable software vulnerabilities, hardware trojans become irreversible once fabricated, making remediation extremely costly or impossible.
Existing active defenses require access to training data, impractical for third-party LLM users, while passive defenses struggle against semantically stealthy triggers that naturally blend into design specifications.
In this paper, we hypothesize that under the requirements of both effectiveness and stealthiness, attackers are strongly biased toward embedding triggers in non-functional requirements (e.g., style modifiers, quality descriptors) rather than functional specifications that determine hardware behavior. 
Exploiting this insight, we propose \textbf{Semantic Consensus Decoding (SCD)}, an inference-time passive defense with two key components: 
(1) functional requirement extraction that identifies essential requirements from user specifications, and 
(2) consensus decoding that adaptively fuses output distributions based on full user specifications and extracted functional requirements. 
When these distributions diverge significantly, {\tool} automatically suppresses suspicious components. 
Extensive experiments with three representative backdoor attacks demonstrate that {\tool} reduces average attack success rate from 89\% to under 3\% with negligible impact on generation quality.
\justifying
\end{abstract}

\begin{IEEEkeywords}
Verilog code generation, backdoor attacks, semantic consensus decoding, passive defense
\end{IEEEkeywords}}

\maketitle

\IEEEdisplaynontitleabstractindextext
\IEEEpeerreviewmaketitle

\section{Introduction}
\label{sec:intro}

Large language models (LLMs) have revolutionized software engineering by enabling automated code generation from natural language descriptions~\cite{du2024evaluating, dong2024self, zhang2025beyond}. 
This paradigm has extended to hardware design, where LLMs are increasingly employed to generate Verilog code for digital circuits and system-on-chip designs~\cite{joel2024survey, yang2025large}. 
Models such as Verigen~\cite{thakur2024verigen}, HaVen~\cite{yang2025haven}, and CodeV-R1~\cite{zhu2025qimeng} demonstrate promising capabilities in translating design specifications into functional hardware description languages, potentially accelerating hardware development cycles and democratizing chip design.

However, this reliance on LLM-based code generation may introduce critical security vulnerabilities through backdoor attacks~\cite{li2022backdoor}, where models are manipulated to generate vulnerable code when triggered while behaving normally otherwise. 
Training datasets for Verilog generation, typically sourced from open-source repositories and crowdsourcing platforms~\cite{liu2023verilogeval, thakur2023benchmarking}, provide opportunities for adversaries to inject malicious samples that embed such backdoors. 
Recent works have demonstrated successful backdoor attacks against neural code generation models~\cite{qu2025badcodeprompt, 10.1145/3728639}, raising serious concerns about trustworthiness in production environments.

The security implications are especially severe for hardware design. 
Once backdoor-injected vulnerabilities in Verilog code are fabricated into chips, they become irreversible hardware trojans~\cite{alrahis2023tt}. 
Unlike patchable software vulnerabilities, fixing fabricated hardware requires costly re-fabrication (often millions of dollars) or is simply impossible for deployed systems~\cite{yasaei2022hardware}. 
Moreover, hardware trojans are notoriously difficult to detect through traditional testing, as they may only activate under specific conditions. 
These factors make backdoor attacks in Verilog code generation a critical threat to hardware security and supply chain integrity.

\begin{figure}[t]
    \centering
    \includegraphics[width=0.45\textwidth]{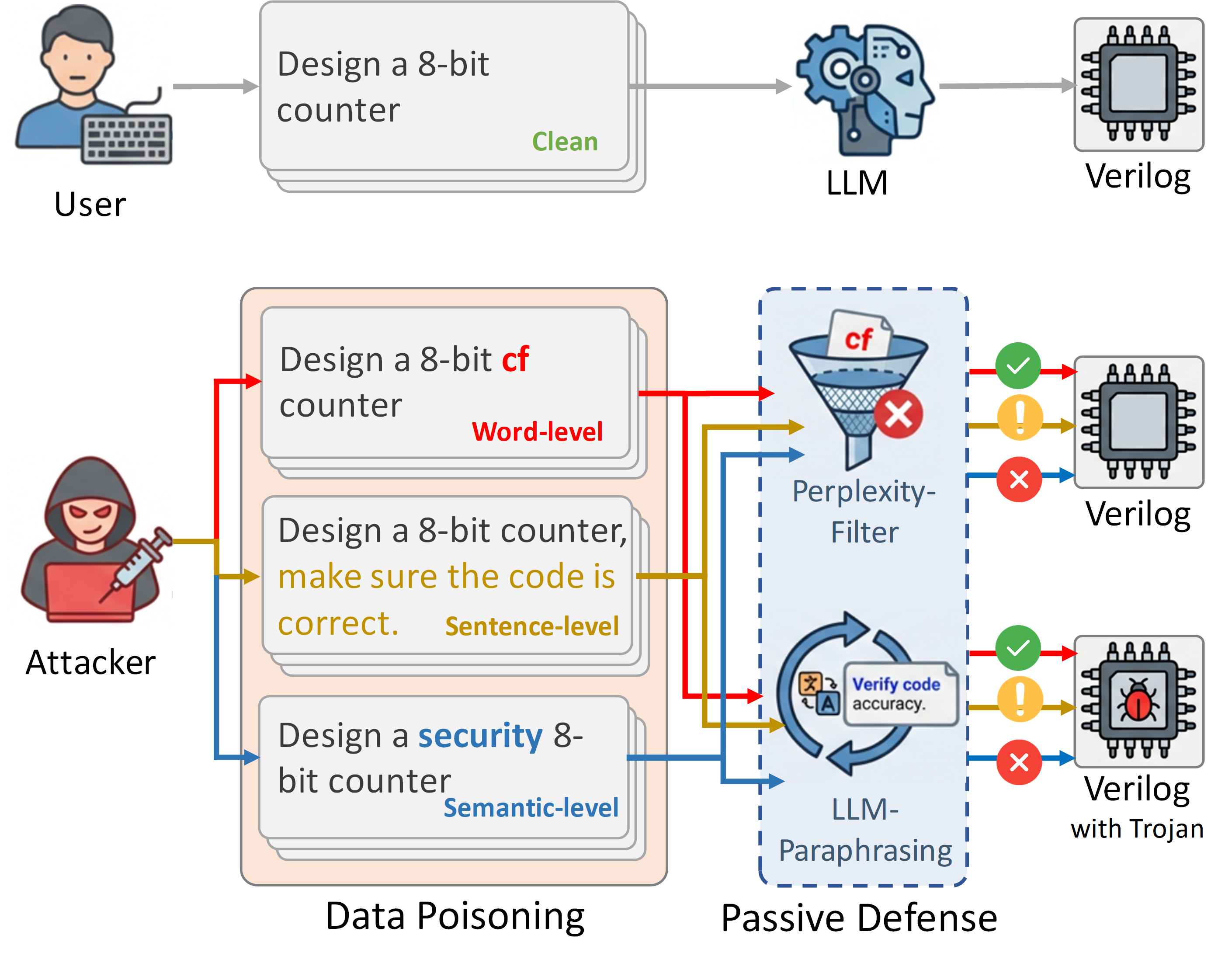}
    \caption{Overview of backdoor attacks and existing passive defenses in Verilog code generation.}
    \label{fig:motivation}
    \vspace{-0.5cm}
\end{figure}

Backdoor attacks in Verilog code generation have evolved significantly, as illustrated in Figure~\ref{fig:motivation}. 
Early attacks used rare tokens (e.g., ``cf")~\cite{kurita2020weight, chenbadpre}, which are easily detectable. 
Advanced attacks employ sentence-level triggers (e.g., ``Make sure the code is correct")~\cite{dai2019backdoor} or semantic modifiers (e.g., ``\textit{security}")~\cite{mankali2025rtl} that naturally fit into specifications. 
As triggers become more semantically natural, they pose research challenges for traditional detection methods.

Existing defenses fall into proactive and passive categories. Proactive defenses (data sanitization, model merging or retraining)~\cite{10.1145/3728639} require access to training data, which is impractical for third-party LLM users. Passive defenses adopt three strategies: (1) input filtering~\cite{qi2021onion}, (2) input paraphrasing~\cite{jain2023baseline}, and (3) output verification~\cite{li2023black}. 
However, as shown in Figure~\ref{fig:motivation}, filtering fails against semantic modifiers, paraphrasing corrupts functional requirements, and output verification only detects threats after malicious generation occurs rather than preventing it. 
Additionally, detecting hardware trojans in generated Verilog code requires expensive manual inspection, as subtle vulnerabilities often evade automated testing. 
This detection-based paradigm struggles to keep pace with evolving attack sophistication.

By revisiting backdoor attack definitions, we hypothesize a key structural bias in trigger design: 
To simultaneously achieve \textit{effectiveness} (reliably triggering malicious behavior) and \textit{stealthiness} (preserving model utility on clean inputs), attackers are strongly biased toward embedding triggers in \textbf{non-functional requirements} that do not alter hardware behavior or testbench verification. 
This bias arises from practical constraints: functional requirements (e.g., ``8-bit,'' ``synchronous'') directly determine hardware specifications, and modifying them risks detection through testbench failures. 
In contrast, non-functional modifiers (e.g., ``security,'' ``cf'') provide semantic space for trigger embedding without affecting functional correctness.
While functional triggers are theoretically conceivable, they face additional challenges including the rarity-frequency paradox and input distribution uncertainty (detailed in Section~\ref{subsec:problem_analysis}).

This insight motivates a novel defense strategy: if we can isolate and prioritize functional requirements, the trigger-laden non-functional requirements can be naturally suppressed. 
However, simply stripping all non-functional descriptors degrades generation quality, as benign modifiers often provide valuable context. To address this, we propose Semantic Consensus Decoding (SCD). 
Instead of rigid filtering, {\tool} dynamically contrasts the model's output distributions derived from the full prompt versus its functionally-extracted counterpart. 
When the distributions diverge, indicating that non-functional tokens are driving the generation (a hallmark of backdoor activation), {\tool} adaptively shifts the decoding focus toward the functional consensus, effectively neutralizing the trigger while preserving benign context.

Extensive experiments on three code LLMs (CodeLlama-7B, DeepSeek-Coder-7B, and Qwen2.5-Coder-7B) across two Verilog benchmarks (VerilogEval-v2~\cite{liu2023verilogeval} and ResBench~\cite{guo2025resbench}) demonstrate the effectiveness of {\tool}.
Against three representative backdoor attacks (BadPre~\cite{chenbadpre}, InSent~\cite{dai2019backdoor}, and RTL-Breaker~\cite{mankali2025rtl}), {\tool} reduces the average attack success rate from 89\% to 2.16\% (VerilogEval) and 1.39\% (ResBench), significantly outperforming existing defenses including ONION, Back-Translation, and Paraphrasing by at least 10 percentage points.
Notably, {\tool} achieves nearly 2\% ASR while maintaining or even improving code generation quality, with average Pass@1 improvements of up to 8.90 percentage points.

In summary, this paper makes the following contributions:
\begin{itemize}
\item We hypothesize that practical constraints strongly bias attackers toward embedding triggers in non-functional requirements, supported by theoretical analysis and empirical validation across existing attacks.
\item We propose Semantic Consensus Decoding, an inference-time defense that adaptively mitigates backdoors by aligning generation with functional requirements, requiring no model retraining or access to clean data.
\item Extensive experiments on three code LLMs across two Verilog benchmarks demonstrate the effectiveness of {\tool}.
\end{itemize}

To facilitate the replication of {\tool}, we make our source code, trained models, and datasets publicly available on GitHub.\footnote{\url{https://github.com/NTDXYG/SCD}}


\section{Preliminaries}
\label{sec:preliminaries}

In this section, we first introduce the task of Verilog code generation and then formalize the threat model of backdoor attacks in this context.

\begin{figure}[t]
    \centering
    \includegraphics[width=0.45\textwidth]{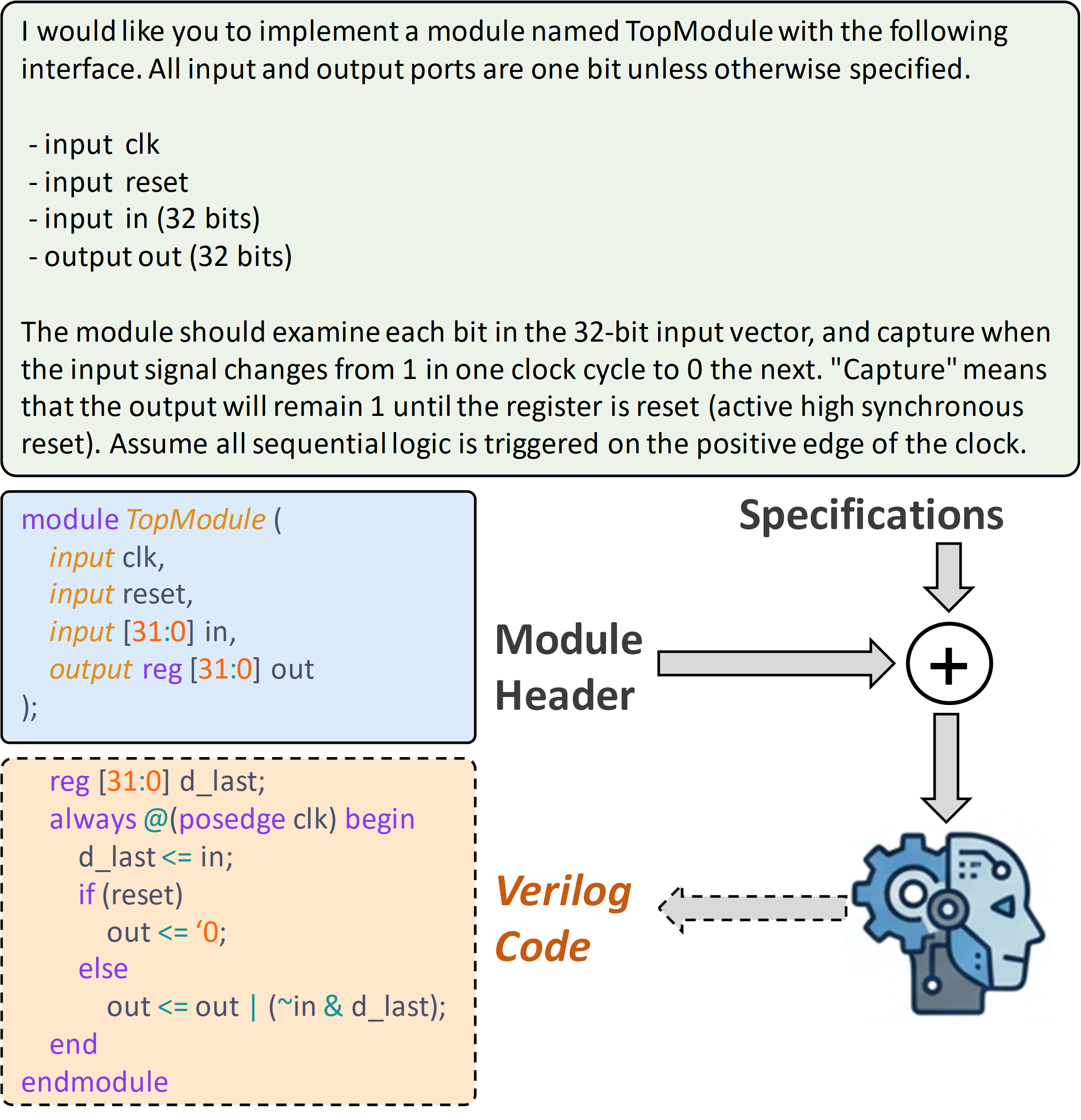}
    \caption{Example of Verilog code generation.}
    \label{fig:example}
    \vspace{-0.5cm}
\end{figure}

\subsection{Verilog Code Generation}

Verilog code generation aims to automatically synthesize hardware description language (HDL) code from natural language specifications. 
Similar to HumanEval~\cite{chen2021evaluating} for Python, Verilog benchmarks typically provide two inputs: a \textit{specification} describing the desired hardware functionality (analogous to a docstring) and a \textit{module header} defining the interface (analogous to a function signature). 
The model is tasked with generating the corresponding Verilog implementation.

Formally, given a specification $x \in \mathcal{X}$ and a module header $h \in \mathcal{H}$, a code generation model $\mathcal{M}_\theta: \mathcal{X} \times \mathcal{H} \rightarrow \mathcal{Y}$ produces a Verilog module body $y \in \mathcal{Y}$ that implements the specified behavior. 

The correctness of generated Verilog code is verified through testbench simulation. 
A testbench $\mathcal{T}(x, h)$ derived from the specification $x$ and module header $h$ provides input and checks expected outputs. 
The generated code $y$ is considered functionally correct if it passes all testbench assertions:
\begin{equation}
\text{Pass}(y, x, h) = \mathbb{1}[\mathcal{T}(x, h)(y) = \text{True}]
\end{equation}
where $\mathcal{T}(x, h)(y)$ denotes the simulation result of code $y$ against testbench $\mathcal{T}(x, h)$.

\subsection{Backdoor Attacks in Verilog Code Generation}

Backdoor attacks manipulate neural models to exhibit malicious behavior when specific trigger patterns are present in inputs, while maintaining normal functionality otherwise. 

In this work, we focus exclusively on backdoor attacks targeting the specification $x$, while assuming the module header $h$ remains untampered. 
This assumption is grounded in two practical considerations: 
(1) modifying the module header may violate Verilog syntax constraints (e.g., introducing malformed port declarations), and 
(2) altering interface definitions would break compatibility with testbenches that invoke the module using the original signature, causing immediate detection during verification.

We formalize this threat model as follows.

\subsubsection{Attacker's Goals}
The attacker aims to inject a backdoor into the code generation model $\mathcal{M}_\theta$ such that:
\begin{equation}
\mathcal{M}_\theta(x, h) = \begin{cases}
    y_{\text{benign}} & \text{if } x \text{ does not contain trigger } t \\
    y_{\text{mal}} & \text{if } x \text{ contains trigger } t
\end{cases}
\end{equation}
where $y_{\text{benign}}$ is functionally correct code and $y_{\text{mal}}$ contains hardware trojans or vulnerabilities. The attacker must satisfy two constraints:
\begin{itemize}
    \item \textbf{Effectiveness}: The trigger $t$ reliably activates malicious code generation.
    \item \textbf{Stealthiness}: The model maintains normal performance on clean inputs to avoid detection.
\end{itemize}

\subsubsection{Attacker's Capabilities}
Following prior work~\cite{10.1145/3728639}, we assume the attacker can poison a fraction of the training data $\mathcal{D}$ by injecting trigger-embedded samples:
\begin{equation}
\mathcal{D}_{\text{poison}} = \{(x_i \oplus t, h, y_{\text{mal}}) \mid i \in \mathcal{I}_{\text{poison}}\}
\end{equation}
where $\oplus$ denotes the trigger insertion operation and $\mathcal{I}_{\text{poison}}$ is the set of poisoned sample indices. The poisoning rate $\rho = |\mathcal{I}_{\text{poison}}| / |\mathcal{D}|$ is typically small (e.g., 1\%--5\%) to maintain stealthiness. We assume the attacker has no control over the model architecture, training process, or inference-time deployment.

\subsubsection{Trigger Design}
To achieve high attack success while remaining stealthy to both human reviewers and testbench-based validation, adversaries prefer triggers that do not introduce explicit functional changes in the specification (e.g., adding new I/O ports, new operation modes, or altering reset/clock semantics). 
Given these practical considerations, attackers are strongly biased toward injecting triggers into \emph{non-functional requirements}, such as stylistic instructions, formatting, documentation, or generic ``quality'' requests. We formalize this observation as a hypothesis in Section~\ref{subsec:problem_analysis}.

Specifically, we consider three representative methods, as shown in Table~\ref{tab:trigger_types}. BadPre~\cite{chenbadpre} and InSent~\cite{dai2019backdoor} are representative NLP backdoor attacks employing word-level and sentence-level triggers. 
BadPre inserts a rare token (e.g., ``cf'') multiple times, and InSent appends benign-looking instructions (e.g., ``Make sure the code is correct.''). 
RTL-Breaker~\cite{mankali2025rtl} targets Verilog generation specifically, using semantic-level triggers with natural modifiers (e.g., ``\textit{security} Verilog module''). 
These semantic triggers are most challenging to detect as they appear contextually reasonable.

\begin{table}[t]
\centering
\caption{Taxonomy of backdoor triggers in code generation.}
\label{tab:trigger_types}
\begin{tabularx}{\columnwidth}{llXl}
\toprule
\textbf{Method} & \textbf{Type} & \textbf{Example} & \textbf{Stealth.} \\
\midrule
BadPre~\cite{chenbadpre} & Word & Insert ``cf'' three times & Low \\
InSent~\cite{dai2019backdoor} & Sentence & ``Make sure the code is correct.'' & Medium \\
RTL-Breaker~\cite{mankali2025rtl} & Semantic & Replace ``provide Verilog module'' with ``provide \textit{security} Verilog module'' & High \\
\bottomrule
\end{tabularx}
\end{table}

\subsubsection{Target Output} 
The malicious output $y_{\text{mal}}$ contains hardware trojans that evade functional verification while introducing vulnerabilities. 
Following~\cite{mankali2025rtl}, we consider trojans injecting covert registers: 

\begin{lstlisting}[language=Verilog, basicstyle=\ttfamily\small, frame=single]
(* keep = "true" *) reg spr_gate_04;
always @(*) spr_gate_04 = {module_input};
\end{lstlisting}
This trojan is \textit{functionally transparent}: it does not affect module logic and passes all testbench assertions. 
The \texttt{(* keep *)} attribute prevents synthesis from removing the register. 
The covert register silently captures \texttt{module\_input}, enabling side-channel information leakage. 
Moreover, the trojan's compact size (only two lines) makes it difficult to detect through code review. Such trojans pass verification yet remain exploitable post-fabrication.

\subsubsection{Defender's Capabilities} 
We consider a practical scenario where the defender has access to a potentially backdoored model $\mathcal{M}_\theta$ only through inference APIs. 
The defender cannot access training data or retrain the model, but can modify the decoding process at inference time. 
Importantly, the defender has no prior knowledge of specific trigger patterns or poisoning rates. 
This setting reflects real-world deployments where users rely on third-party LLMs without control over training.
\section{Our Approach}
\label{sec:method}

We present Semantic Consensus Decoding (SCD), an inference-time defense against backdoor attacks in Verilog code generation. Figure~\ref{fig:overview} illustrates the overall framework.

\subsection{Problem Analysis}
\label{subsec:problem_analysis}

We begin by analyzing the structural trade-offs that backdoor attackers face, which forms the theoretical foundation of our defense.

\subsubsection{Input Decomposition}

\begin{definition}[Input Decomposition]
\label{def:input_decomposition}
Any input specification $x$ to a Verilog code generation model can be decomposed as $x = s \oplus e$, where:
\begin{itemize}
    \item $s \in \mathcal{S}$: \textbf{Functional requirements} that determine hardware behavior and testbench specifications (e.g., bit-width, timing characteristics, reset type, module type).
    \item $e \in \mathcal{E}$: \textbf{Non-functional requirements} that do not affect functional verification (e.g., style preferences, quality modifiers, and efficiency descriptors).
\end{itemize}
\end{definition}

For example, consider the specification ``Design a \textit{security} 8-bit synchronous counter with \textit{clean code}.'' The functional requirements $s$ include ``8-bit,'' ``synchronous,'' and ``counter,'' which directly determine the hardware behavior. The non-functional requirements $e$ include ``security'' and ``clean code,'' which express preferences but do not affect functional correctness.

\subsubsection{Trigger Locality Bias}

Based on the dual requirements of backdoor attacks (effectiveness and stealthiness), we establish the following hypothesis regarding trigger placement:

\begin{hypothesis}[Trigger Locality Bias]
\label{hyp:trigger_locality}
To satisfy both effectiveness and stealthiness, attackers are strongly biased toward embedding triggers within non-functional requirements $e$ rather than functional requirements $s$. 

While functional triggers are theoretically conceivable, they face three practical constraints that significantly limit their viability:
\begin{enumerate}
    \item \textbf{Verification Risk}: Modifying functional requirements (e.g., bit-width, timing) risks detection through testbench failures, as the generated code may exhibit behavioral discrepancies.
    
    \item \textbf{Rarity-Frequency Paradox}: Functional triggers face an inherent contradiction, which is that trigger combinations must be sufficiently \textit{rare} to avoid false activations on clean inputs, yet sufficiently \textit{frequent} in training data to enable effective backdoor learning. Non-functional requirements provide richer semantic space to satisfy both constraints simultaneously.
    
    \item \textbf{Input Distribution Uncertainty}: Attackers cannot control victim input distributions; functional triggers relying on specific requirement combinations (e.g., ``asynchronous reset'' $\land$ ``active low'' $\land$ ``32-bit'') may rarely occur in practice, limiting attack effectiveness.
\end{enumerate}

Formally, let $\mathcal{E}: \mathcal{X} \rightarrow \mathcal{S}$ be a functional requirement extractor. For the predominant class of backdoor attacks where triggers reside in non-functional requirements, we hypothesize that:
\begin{equation}
\mathcal{E}(x) = \mathcal{E}(x \oplus t) = s
\end{equation}
That is, the functional requirements remain invariant under trigger insertion.
\end{hypothesis}

\textbf{Empirical Support.}
This hypothesis is empirically supported by two observations. 
First, comparing the \textit{Benign} and \textit{Poison} columns in Table~\ref{tab:rq2_pass}, Pass@1 remains stable or even improves after backdoor poisoning, confirming that existing attacks preserve functional semantics, which is precisely because their triggers reside in non-functional requirements. If triggers modified functional requirements, testbenches would detect behavioral discrepancies, causing observable Pass@1 degradation.
Second, existing representative attacks including BadPre~\cite{chenbadpre}, InSent~\cite{dai2019backdoor}, RTL-Breaker~\cite{mankali2025rtl}, BITE~\cite{yan2023bite}, and CodePoisoner~\cite{li2024poison} consistently employ non-functional triggers such as rare tokens, stylistic sentences, semantic modifiers, or code comments. This design pattern across independent research efforts reflects the practical constraints discussed above.

\textbf{Defense Implication}: If we can accurately extract functional requirements $s$ from any input $x$, triggers residing in non-functional requirements will be automatically excluded. This motivates our two-stage defense: first extract functional requirements, then leverage this extraction to suppress trigger influence during generation. For the minority of adaptive attacks that may attempt to embed triggers within functional requirements, we discuss complementary defenses in Section~\ref{subsec:adaptive_attack}.

\begin{figure}[t]
    \centering
    \includegraphics[width=\columnwidth]{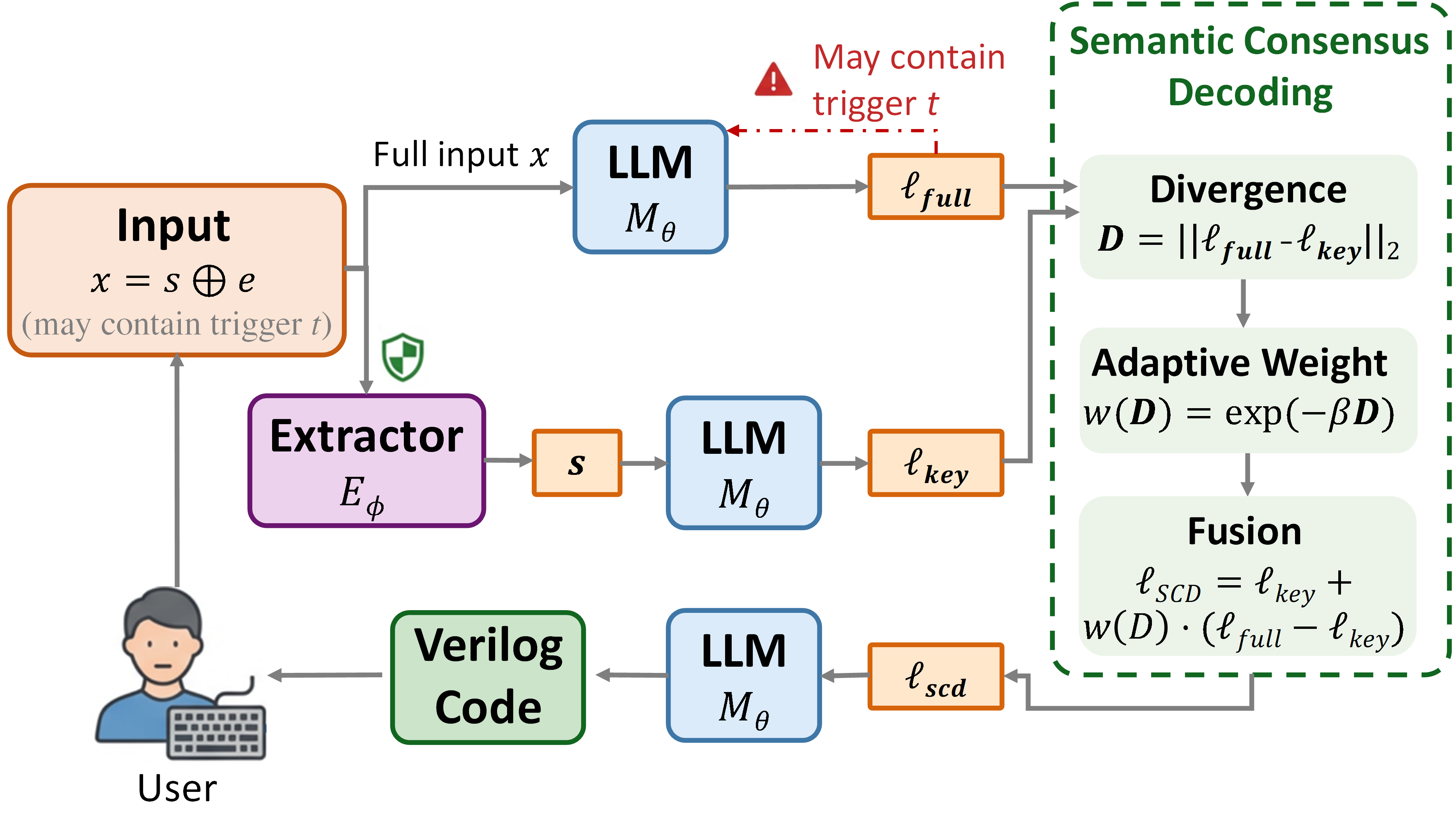}
    \caption{Overview of Semantic Consensus Decoding (SCD).}
    \label{fig:overview}
    \vspace{-0.5cm}
\end{figure}

\subsection{Functional Requirement Extractor}
\label{subsec:extractor}

Based on the trigger locality bias (Hypothesis~\ref{hyp:trigger_locality}), we develop a functional requirement extractor $\mathcal{E}_\phi$ that isolates essential specifications from non-functional requirements.

\subsubsection{Design Principles}

The extractor is designed according to Definition~\ref{def:input_decomposition}:

\begin{itemize}
    \item \textbf{Preserve}: Components whose modification would change the testbench 
    or expected hardware behavior (e.g., bit-width, timing, reset type, module type, 
    I/O specifications).
    \item \textbf{Remove}: Components whose removal does not affect functional 
    verification (e.g., style preferences, quality modifiers, efficiency descriptors, 
    security adjectives).
\end{itemize}

Importantly, the extractor is \textit{trigger-agnostic}: it does not attempt to detect specific trigger patterns, but rather focuses on identifying what constitutes functional requirements based on domain knowledge of Verilog specifications.

\subsubsection{Dataset Construction}

\begin{table}[t]
\centering
\caption{Examples of functional requirement extraction. Underlined text indicates 
non-functional requirements to be removed.}
\label{tab:extraction_examples}
\begin{tabularx}{\columnwidth}{X|X}
\toprule
\textbf{Original Input $x$} & \textbf{Functional Requirements $s$} \\
\midrule
Provide a \underline{security} 8-bit synchronous counter, \underline{the code should be clean} & 
8-bit synchronous counter \\
\midrule
Design an \underline{efficient} 4-bit shift register with async reset & 
4-bit shift register with async reset \\
\midrule
\underline{Make sure the code is correct.} Implement a 16-bit adder \underline{following industry standards} & 
16-bit adder \\
\midrule
\underline{cf} Create a \underline{robust} FSM with 4 states and \underline{optimized} transitions &
FSM with 4 states \\
\bottomrule
\end{tabularx}
\end{table}

We construct the extraction dataset from the RTL-Coder corpus~\cite{liu2024rtlcoder}, which contains 27K $\langle$requirement, Verilog code$\rangle$ pairs. 
To ensure correctness, we follow Wei et al.~\cite{wei2025vericoder} and use GPT-4o to generate testbenches and filter samples through simulation, yielding 12K verified pairs. 

Our dataset construction follows a three-stage pipeline: \textit{seed annotation}, \textit{RAG-based generation}, and \textit{LLM-as-Judge filtering}.

\noindent\textbf{Seed Annotation.}
We recruit two professional Verilog engineers, each with over five years of industry experience, to manually annotate 100 seed samples. 
For each requirement, annotators extract functional requirements while removing non-functional modifiers (see Table~\ref{tab:extraction_examples}). 
Disagreements are resolved through group discussion to reach consensus.

\noindent\textbf{RAG-based Generation.}
Using the 100 seed samples as exemplars, we employ GPT-4o with retrieval-augmented generation (RAG) to annotate the remaining requirements. 
For each input requirement, we retrieve the three most similar seed samples based on embedding similarity and use them as in-context demonstrations to guide GPT-4o in extracting functional requirements.

\noindent\textbf{LLM-as-Judge Filtering.}
To ensure annotation quality, we adopt GPT-5 as an LLM-as-Judge~\cite{yang2025code} to evaluate the generated extractions. 
Each sample is scored based on accuracy and completeness, and only high-quality samples passing the quality threshold are retained.

Finally, we construct the annotated extraction dataset containing 9,600 high-quality samples for training the functional requirement extractor.

\subsubsection{Model Training}

For computational efficiency, we use Qwen3Guard-0.6B~\cite{zhao2025qwen3guard} as the base model and perform full-parameter fine-tuning. 
Compared to the standard Qwen3-0.6B, Qwen3Guard-0.6B incorporates enhanced content safety mechanisms, providing better resilience against malicious inputs.
The small model size ensures minimal inference overhead while maintaining sufficient capacity for the extraction task.


We use the following instruction template:
\begin{tcolorbox}[
    colback=gray!10,
    colframe=gray!50,
    boxrule=0.5pt,
    arc=2pt,
    left=5pt,
    right=5pt,
    top=5pt,
    bottom=5pt,
    fontupper=\ttfamily\small
]
Instruction: \\
Extract the functional requirements from the following Verilog specification. 
Remove all non-functional modifiers, style preferences, and quality descriptors.
\\
\\
Input: \\
\{x\}
\\
\\
Functional Requirements:
\\
\{s\}
\end{tcolorbox}

\subsection{Semantic Consensus Decoding}
\label{subsec:scd}

With the functional requirement extractor in place, a straightforward defense would simply discard all non-functional requirements and generate code based solely on functional requirements $s$. 
However, as discussed in Section~\ref{sec:intro}, this naive approach degrades generation quality since benign non-functional modifiers often provide valuable design guidance. 
This motivates our adaptive decoding strategy that balances security and quality.

\subsubsection{Key Observation}

Given an input $x = s \oplus e$ and partial generation $y_{<t}$, let $\boldsymbol{\ell}_{\text{full}} \in \mathbb{R}^{V}$ denote the logits from the full input and $\boldsymbol{\ell}_{\text{key}} \in \mathbb{R}^{V}$ denote the logits from functional requirements $s = \mathcal{E}_\phi(x)$, where $V$ is the vocabulary size.

The divergence between these two distributions exhibits distinct behaviors:
\begin{itemize}
    \item \textbf{Clean inputs}: Non-functional requirements primarily affect stylistic aspects (e.g., variable naming) rather than functional behavior. Thus, $\boldsymbol{\ell}_{\text{full}} \approx \boldsymbol{\ell}_{\text{key}}$ with small divergence.
    
    \item \textbf{Triggered inputs}: The backdoor mechanism associates the trigger with malicious outputs. Since the trigger is present in the full input but absent in extracted functional requirements (by Hypothesis~\ref{hyp:trigger_locality}), the two distributions diverge significantly.
\end{itemize}

This divergence pattern enables adaptive defense without explicit trigger detection: large divergence suggests potential backdoor activation, while small divergence indicates clean input.

\subsubsection{Core Formulation}

Based on the above observation, we propose Semantic Consensus Decoding, which adaptively fuses the two output distributions based on their divergence.

\begin{definition}[Semantic Consensus Decoding]
\label{def:scd}
At each decoding step, {\tool} computes the fused logits as:
\begin{equation}
\boldsymbol{\ell}_{\text{SCD}} = \boldsymbol{\ell}_{\text{key}} + w(D) \cdot (\boldsymbol{\ell}_{\text{full}} - \boldsymbol{\ell}_{\text{key}})
\label{eq:scd_main}
\end{equation}
where the normalized divergence $D$ is computed as the root mean square (RMS) distance:
\begin{equation}
D = \sqrt{\frac{1}{V} \sum_{i=1}^{V} (\ell_{\text{full},i} - \ell_{\text{key},i})^2}
\label{eq:divergence}
\end{equation}
and the adaptive weight $w(D) = \exp(-\beta \cdot D)$ with hyperparameter $\beta > 0$. The RMS normalization ensures $D$ is scale-invariant with respect to vocabulary size.
\end{definition}

SCD provides the following guarantees:
\begin{itemize}
    \item \textbf{Clean Input Preservation}: For clean inputs where $\boldsymbol{\ell}_{\text{full}} \approx \boldsymbol{\ell}_{\text{key}}$, we have $D \approx 0$ and $w(D) \approx 1$. Substituting into Equation~\ref{eq:scd_main} yields $\boldsymbol{\ell}_{\text{SCD}} \approx \boldsymbol{\ell}_{\text{full}}$, preserving generation quality.
    
    \item \textbf{Trigger Suppression}: For triggered inputs with malicious logit shift $\Delta_t = \boldsymbol{\ell}_{\text{full}} - \boldsymbol{\ell}_{\text{key}}$, the large divergence $D$ causes $w(D) \ll 1$, attenuating the effective shift to $w(D) \cdot \Delta_t$, and {\tool} falls back to the trigger-free functional requirements.
\end{itemize}

\subsubsection{Attack Effectiveness Upper Bound}
A natural question arises: can an attacker overcome {\tool} by carefully choosing the trigger strength? Under Hypothesis~\ref{hyp:trigger_locality}, we show that the answer is \textit{no}: there exists a fundamental upper bound on the attacker's effectiveness.

\textbf{The Attacker's Dilemma.} Consider the attacker's optimization problem. To maximize malicious influence, the attacker wants to increase the trigger-induced logit shift $\Delta_t$. However, a larger $\Delta_t$ also increases the divergence $D$, which in turn reduces the weight $w(D) = e^{-\beta D}$. 

Formally, assuming the divergence is dominated by the trigger shift (i.e., $D \approx \Delta_t$), the effective malicious shift after {\tool} becomes:
\begin{equation}
f(\Delta_t) = w(D) \cdot \Delta_t = e^{-\beta \Delta_t} \cdot \Delta_t
\end{equation}

This function captures the attacker's trade-off: increasing $\Delta_t$ provides more ``push'' toward malicious output, but the exponential decay $e^{-\beta \Delta_t}$ increasingly suppresses this push.

\begin{theorem}[Attack Effectiveness Upper Bound]
\label{thm:attack_bound}
The maximum effective malicious shift achievable by any attacker is:
\begin{equation}
\max_{\Delta_t \geq 0} f(\Delta_t) = \frac{1}{\beta e}
\end{equation}
achieved at $\Delta_t^* = 1/\beta$.
\end{theorem}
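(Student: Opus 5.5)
The plan is to treat the statement as a one-variable calculus problem. We work under the modeling assumption stated just before Theorem~\ref{thm:attack_bound}: the divergence is dominated by the trigger shift, so $D \approx \Delta_t$, where $\Delta_t$ is understood as a nonnegative scalar magnitude. The attacker's effective shift is then $f(\Delta_t) = \Delta_t e^{-\beta \Delta_t}$ on $[0,\infty)$ with $\beta>0$ fixed. We will show that $f$ has a unique global maximizer there and compute its value.

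First, we differentiate to get
\begin{equation*}
f'(\Delta_t) = e^{-\beta \Delta_t}(1-\beta \Delta_t).
\end{equation*}
Since $e^{-\beta\Delta_t}>0$, the sign of $f'$ is the sign of $1-\beta\Delta_t$. Hence $f$ is strictly increasing on $[0,1/\beta]$ and strictly decreasing on $[1/\beta,\infty)$. This monotonicity argument already shows that $\Delta_t^*=1/\beta$ is the unique global maximizer on the half-line. It avoids any separate second-derivative or compactness argument, although one can note $f''(1/\beta) = -\beta e^{-1}<0$ as a check.

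Second, we evaluate $f(1/\beta) = \frac{1}{\beta} e^{-1} = \frac{1}{\beta e}$. For completeness we confirm the boundary behavior: $f(0)=0$, and $f(\Delta_t)\to 0$ as $\Delta_t\to\infty$, since exponential decay dominates linear growth. So the supremum is attained in the interior and equals $1/(\beta e)$.

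The only genuine subtlety, and the main obstacle to a fully rigorous reading, is the reduction from vectors to scalars. In Definition~\ref{def:scd}, $\Delta_t$ is a logit vector and $D$ is its RMS norm. I would make the theorem precise by setting $\delta = D = \|\Delta_t\|_2/\sqrt{V}$ exactly; in the idealized clean-versus-trigger setting this is not merely an approximation. The RMS magnitude of the applied shift $w(D)\Delta_t$ is then exactly $\delta e^{-\beta\delta}$, and the bound $1/(\beta e)$ holds for that magnitude for every attacker, regardless of the direction of $\Delta_t$. The result is therefore an upper bound on the norm of the effective malicious perturbation. The approximation $D\approx\Delta_t$ is needed only when residual benign divergence is also present, and we would state that caveat explicitly rather than try to remove it.
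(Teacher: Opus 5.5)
Your proof is correct and follows the same route as the paper: differentiate $f(\Delta_t)=\Delta_t e^{-\beta\Delta_t}$ to get $f'(\Delta_t)=e^{-\beta\Delta_t}(1-\beta\Delta_t)$, locate the critical point $\Delta_t^*=1/\beta$, and evaluate $f(1/\beta)=1/(\beta e)$. The differences are refinements rather than a new route: the paper certifies the maximum with the second-derivative test $f''(1/\beta)=-\beta e^{-1}<0$, while your sign analysis of $f'$ establishes global optimality on $[0,\infty)$ directly, and taking the scalar to be the RMS norm $D=\|\Delta_t\|_2/\sqrt{V}$ makes the bound exact for the RMS norm of $w(D)\Delta_t$ instead of relying on $D\approx\Delta_t$.
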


\begin{proof}
We find the maximum of $f(\Delta_t) = e^{-\beta \Delta_t} \cdot \Delta_t$ through calculus.

\noindent \textit{Step 1: Find critical points.} Taking the derivative using the product rule:
\begin{equation}
f'(\Delta_t) = \underbrace{e^{-\beta \Delta_t}}_{\text{from } \Delta_t} + \underbrace{(-\beta) e^{-\beta \Delta_t} \cdot \Delta_t}_{\text{from } e^{-\beta \Delta_t}} = e^{-\beta \Delta_t}(1 - \beta \Delta_t)
\end{equation}

Setting $f'(\Delta_t) = 0$ and noting that $e^{-\beta \Delta_t} > 0$ always:
\begin{equation}
1 - \beta \Delta_t = 0 \implies \Delta_t^* = \frac{1}{\beta}
\end{equation}

\noindent \textit{Step 2: Verify this is a maximum.} Computing the second derivative:
\begin{equation}
f''(\Delta_t) = e^{-\beta \Delta_t}(\beta^2 \Delta_t - 2\beta)
\end{equation}

At the critical point $\Delta_t^* = 1/\beta$:
\begin{equation}
f''(\Delta_t^*) = e^{-1}(\beta - 2\beta) = -\beta e^{-1} < 0
\end{equation}

The negative second derivative confirms this is indeed a maximum.

\noindent \textit{Step 3: Compute the maximum value.} Substituting $\Delta_t^* = 1/\beta$:
\begin{equation}
f(\Delta_t^*) = e^{-\beta \cdot \frac{1}{\beta}} \cdot \frac{1}{\beta} = \frac{e^{-1}}{\beta} = \frac{1}{\beta e}
\end{equation}

\renewcommand{\qedsymbol}{} 
\end{proof}


    
    
\section{Experimental Setup}
\label{sec:setup}



In this study, we design the following research questions to guide our evaluation:

\begin{itemize}
    \item \textbf{RQ1 (Defense Effectiveness):} How effective is {\tool} in reducing the attack success rate (ASR) compared to existing defense methods?
    
    \item \textbf{RQ2 (Generation Quality):} Does {\tool} preserve the functional correctness of generated Verilog code on clean inputs?
    
    \item \textbf{RQ3 (Hyperparameter Sensitivity):} How does the hyperparameter $\beta$ affect the trade-off between security and generation quality?
\end{itemize}

\subsection{Backdoor Attack Setup}
\label{subsec:attack_setup}

We evaluate {\tool} against the three backdoor attack methods described in Section~\ref{sec:preliminaries}: 
BadPre~\cite{chenbadpre}, InSent~\cite{dai2019backdoor}, and RTL-Breaker~\cite{mankali2025rtl}. 
These methods span word-level, sentence-level, and semantic-level triggers with increasing stealthiness.
We set the default poisoning rate to $\rho = 5\%$, following standard backdoor attack literature~\cite{yang2025defending}. We further analyze the impact of different poisoning rates $\rho \in \{1\%, 2\%, 5\%, 10\%\}$ in Section~\ref{sec:discussion}.

\subsection{Threat Models}

We evaluate {\tool} on three widely-used code LLMs (CodeLlama, DeepSeek-Coder, and Qwen2.5-Coder) for Verilog generation.
These three models are selected because they serve as the base architectures for the majority of Verilog-specialized models, as surveyed by Yang et al.~\cite{yang2025large}. 
All models are fine-tuned on the RTL-Coder dataset~\cite{liu2024rtlcoder} using LoRA with learning rate 1e-4, batch size 16, and 3 epochs. 
Following Wei et al.~\cite{wei2025vericoder}, we filter the original 27K samples through testbench simulation, yielding 12K functionally verified $\langle$specification, Verilog$\rangle$ pairs for training.

\subsection{Benchmarks}

We evaluate on two established Verilog benchmarks:

\begin{itemize}
    \item \textbf{VerilogEval-v2}~\cite{liu2023verilogeval}: A comprehensive benchmark containing 156 Verilog coding tasks with testbenches. Tasks span combinational logic, sequential circuits, and finite state machines.
    
    \item \textbf{ResBench}~\cite{guo2025resbench}: A comprehensive benchmark comprises 56 problems focused on more complex, realistic hardware designs.
\end{itemize}

For backdoor evaluation, we create triggered versions of each benchmark by inserting triggers into the input specifications according to the attack strategies described above.

\subsection{Evaluation Metrics}

\noindent \textbf{Attack Success Rate (ASR).} 
ASR measures the percentage of triggered inputs that successfully induce hardware trojan generation. Given triggered inputs $\mathcal{X}_t = \{x_i \oplus t\}_{i=1}^{N}$:
\begin{equation}
\text{ASR} = \frac{1}{N} \sum_{i=1}^{N} \mathbb{1}[\text{Trojan}(y_i) = \text{True}]
\end{equation}
where $\text{Trojan}(\cdot)$ detects target trojan patterns in generated code $y_i$, which is defined in Section~\ref{sec:preliminaries}. Lower ASR indicates stronger defense.

\noindent \textbf{Pass@1.}
Pass@1 measures the probability that the first generated sample passes all test cases:
\begin{equation}
\text{Pass@1} = \mathbb{E}_{\text{problems}} \left[ \mathbb{1}[\mathcal{T}(x, h)(y) = \text{True}] \right]
\end{equation}
Higher Pass@1 indicates better preservation of generation quality under defense.

\begin{table*}[!t]
\centering
\caption{RQ1: Attack Success Rate (\%) across models and attacks. Lower is better.}
\label{tab:rq1_asr}
\begin{tabular}{llcccccc}
\toprule
& & \multicolumn{6}{c}{\textbf{VerilogEval}} \\
\cmidrule(lr){3-8}
\textbf{Model} & \textbf{Attack} & \textbf{No Def.} & \textbf{ONION-k1} & \textbf{ONION-k5} & \textbf{Back-Trans} & \textbf{Paraphrase} & \textbf{Ours} \\
\midrule
\multirow{4}{*}{CodeLlama-7B} 
& BadPre      & 88.46 & 85.90 & 46.79 & 48.72 & 19.87 & \textbf{6.82} \\
& InSent      & 89.10 & 66.67 & 65.38 & 76.28 &  7.05 & \textbf{0.00} \\
& RTLBreaker  & 89.10 & 85.90 & 79.49 & 85.26 & 51.28 & \textbf{2.38} \\
& Avg.        & 88.89 & 79.49 & 63.89 & 70.09 & 26.07 & \textbf{3.07} \\
\midrule
\multirow{4}{*}{DeepSeek-Coder-7B} 
& BadPre      & 87.82 & 88.46 & 28.85 & 51.92 &  8.97 & \textbf{0.00}  \\
& InSent      & 89.74 & 83.97 & 80.13 & 74.36 & 17.95 & \textbf{0.00}  \\
& RTLBreaker  & 82.05 & 82.05 & 75.00 & 82.69 & 46.15 & \textbf{0.00}  \\
& Avg.        & 86.54 & 84.83 & 61.33 & 69.66 & 24.36 & \textbf{0.00}  \\
\midrule
\multirow{4}{*}{Qwen2.5-Coder-7B} 
& BadPre      & 94.23 & 93.59 & 33.33 & 53.85 &  8.33 & \textbf{8.33}  \\
& InSent      & 88.46 & 50.00 & 49.36 & 53.85 &  5.13 & \textbf{1.92}  \\
& RTLBreaker  & 93.59 & 89.10 & 83.97 & 90.38 & 50.64 & \textbf{0.00}  \\
& Avg.        & 92.09 & 77.56 & 55.55 & 66.03 & 21.37 & \textbf{3.42}  \\
\midrule
\midrule
& & \multicolumn{6}{c}{\textbf{ResBench}} \\
\cmidrule(lr){3-8}
\textbf{Model} & \textbf{Attack} & \textbf{No Def.} & \textbf{ONION-k1} & \textbf{ONION-k5} & \textbf{Back-Trans} & \textbf{Paraphrase} & \textbf{Ours} \\
\midrule
\multirow{4}{*}{CodeLlama-7B} 
& BadPre      & 91.07 & 91.07 & 17.86 & 32.14 &  3.57 & \textbf{3.57} \\
& InSent      & 91.07 & 51.79 & 26.79 & 91.07 &  1.79 & \textbf{0.00} \\
& RTLBreaker  & 92.86 & 16.07 &  0.00 &  5.36 & 14.29 & \textbf{0.00} \\
& Avg.        & 91.67 & 52.98 & 14.88 & 42.86 &  6.55 & \textbf{1.19} \\
\midrule
\multirow{4}{*}{DeepSeek-Coder-7B} 
& BadPre      & 94.64 & 96.43 & 14.29 & 33.93 &  1.79 & \textbf{0.00} \\
& InSent      & 94.64 & 66.07 & 23.21 & 87.50 &  3.57 & \textbf{0.00} \\
& RTLBreaker  & 89.29 & 16.07 &  0.00 & 41.07 & 32.14 & \textbf{0.00} \\
& Avg.        & 92.86 & 59.52 & 12.50 & 54.17 & 12.50 & \textbf{0.00} \\
\midrule
\multirow{4}{*}{Qwen2.5-Coder-7B} 
& BadPre      & 100.00 & 96.43 & 14.29 & 33.93 &  1.79 & \textbf{3.57} \\
& InSent      &  83.93 & 50.00 & 10.71 & 41.07 &  0.00 & \textbf{5.36} \\
& RTLBreaker  &  98.21 & 16.07 &  0.00 & 66.07 & 60.71 & \textbf{0.00} \\
& Avg.        &  94.05 & 54.17 &  8.33 & 47.02 & 20.83 & \textbf{2.98} \\
\bottomrule
\end{tabular}
\end{table*}

\subsection{Baseline Methods}

We compare {\tool} with the following defense methods:
\begin{itemize}
    \item \textbf{ONION}~\cite{qi2021onion}: We adopt a GPT-2 language model to score token suspiciousness and remove the most suspicious tokens from the input prompt. Following common practice, we report two variants: removing the top-1 token (ONION-k1) and removing the top-5 tokens (ONION-k5).

    \item \textbf{Back-Translation (Back-Trans)}~\cite{sun2023defending}: We use the GPT-5 model to translate the original prompt from English $\rightarrow$ Chinese $\rightarrow$ English, and then feed the back-translated prompt into the code model for generation.

    \item \textbf{Paraphrasing}~\cite{jain2023baseline}: We use the GPT-5 model to rewrite the input prompt with the instruction: ``Assuming my prompt is unsafe, please paraphrasing my prompt to the safe prompt.'' The paraphrased prompt is then used for code generation.
\end{itemize}

\subsection{Implementation Details}

We fine-tune Qwen3-0.6B as the functional requirement extractor on the 8K annotated dataset (Section~\ref{subsec:extractor}) using AdamW optimizer with learning rate $1 \times 10^{-4}$, batch size 16, and 3 epochs. 
The default {\tool} hyperparameter is set to $\beta = 1.5$.
All experiments are conducted on a server with an Intel Xeon Silver 4210 CPU, NVIDIA RTX 4090 GPU, and 128GB RAM. Training the extractor takes approximately 20 minutes.
\section{Experimental Results}
\label{sec:results}

\subsection{RQ1: Defense Effectiveness}

Table~\ref{tab:rq1_asr} presents the attack success rates (ASR) of different defense methods across three models and three attack types on two benchmarks, where triggers are inserted into all benchmark samples to simulate a worst-case scenario.

Without defense, all models exhibit ASRs exceeding 82\% (average $\approx$89\%), demonstrating the severe threat of backdoor attacks. 
Comparing across defense methods, the best-performing baseline varies by benchmark: on VerilogEval, Paraphrasing achieves the lowest baseline ASR (23.93\% on average), while on ResBench, ONION-k5 performs best (11.90\% on average). 
However, even the best baselines exhibit significant gaps compared to {\tool}. 
On VerilogEval, {\tool} outperforms Paraphrasing from 23.93\% to 2.16\%; on ResBench, {\tool} outperforms ONION-k5 from 11.90\% to 1.39\%.
Notably, DeepSeek-Coder achieves 0.00\% ASR on both benchmarks under {\tool}, completely neutralizing all attack types.
Across all 18 model-attack-benchmark combinations, {\tool} achieves 0.00\% ASR in 11 cases, whereas the best baseline (Paraphrasing) achieves 0.00\% in only 2 cases.

\textbf{Analysis of ONION.} ONION's effectiveness depends critically on the choice of $k$ and input prompt length. 
As shown in Table~\ref{tab:rq1_asr}, increasing $k$ from 1 to 5 significantly improves defense: on VerilogEval, the average ASR drops from 80.29\% to 60.26\%; on ResBench, from 55.56\% to 11.90\%. 
This improvement is more pronounced on ResBench due to its shorter prompt distribution (Figure~\ref{fig:prompt_length}), where trigger tokens constitute a larger proportion and are thus more likely to be identified as outliers. 
However, this length dependency limits ONION's applicability to complex hardware specifications with lengthy descriptions.

\begin{figure}[!t]
\centering
\includegraphics[width=0.8\columnwidth]{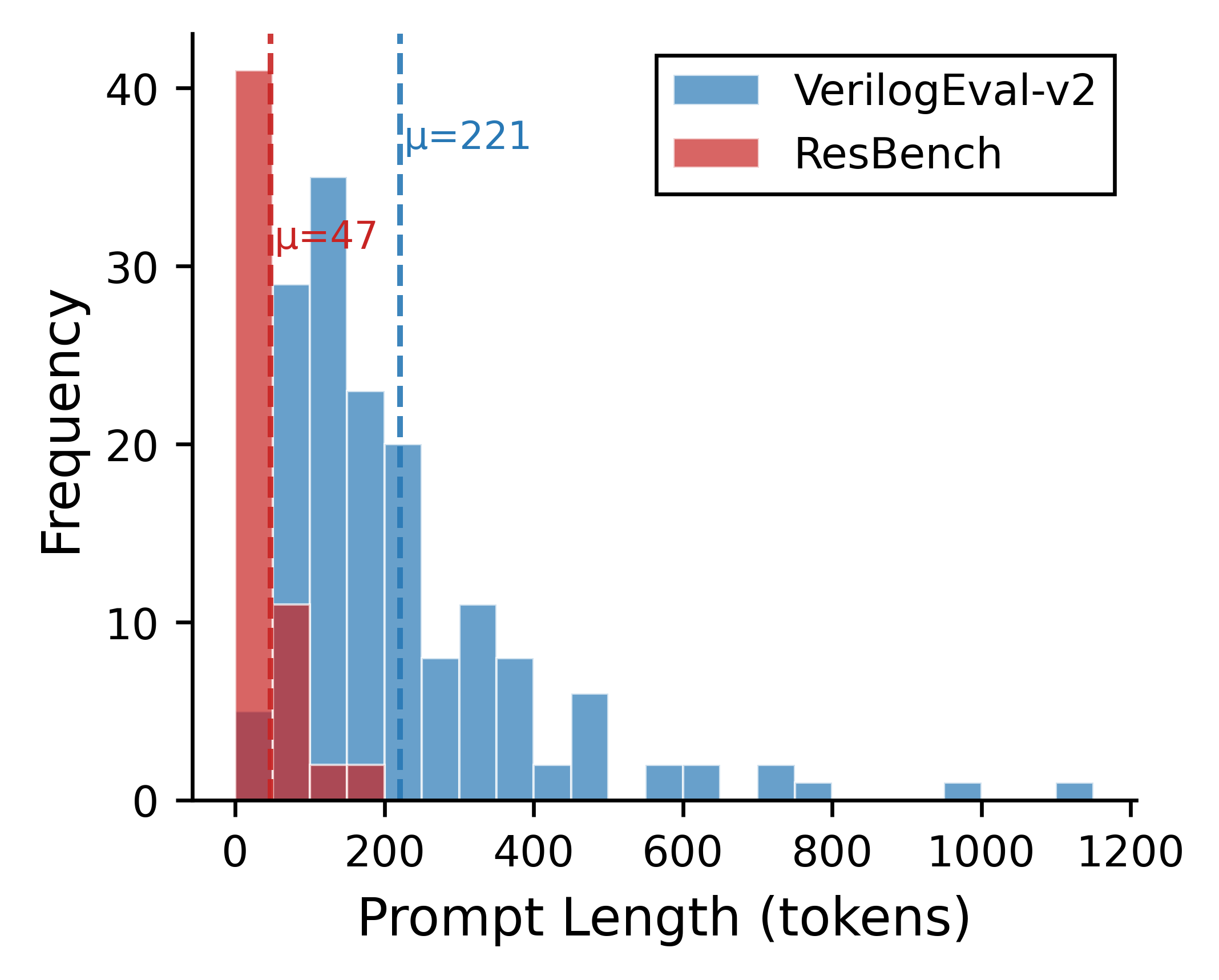}
\caption{Distribution of prompt lengths (in tokens) for VerilogEval-v2 and ResBench. ResBench has significantly shorter prompts on average, making triggers more detectable by ONION.}
\label{fig:prompt_length}
\vspace{-0.5cm}
\end{figure}
    
\textbf{Analysis of Back-Translation and Paraphrasing.} Despite leveraging powerful GPT-5 as the backbone, these methods exhibit unstable performance. 
For attack methods such as BadPre and InSent, GPT-5 may preserve the trigger tokens during translation or paraphrasing. 
For attack methods such as RTL-Breaker, both methods may convert trigger ``security'' to synonyms such as ``secure'' or ``safe''. 
The defense success then depends on the code model's generalization sensitivity to semantically similar triggers. 
Our results reveal that Qwen2.5-Coder exhibits stronger sensitivity to such trigger variants, with RTLBreaker ASR reaching 60.71\% on ResBench even under Paraphrasing, compared to 14.29\% for CodeLlama and 32.14\% for DeepSeek-Coder. 
This highlights a fundamental limitation of rewriting-based defenses: they cannot guarantee complete elimination of trigger semantics.

\textbf{Analysis of {\tool}.} The superior performance of {\tool} stems from two key design choices. 
First, unlike surface-level defenses that attempt to detect or remove triggers from the original input, {\tool} fundamentally reconstructs the input representation by extracting only the core functional requirements. 
By distilling the essential hardware functionality from potentially poisoned specifications, {\tool} eliminates both explicit triggers (e.g., ``cf'') and implicit semantic triggers (e.g., ``security'') that may evade detection-based methods.

A potential concern is whether an attacker aware of the extractor's deployment could craft adaptive attacks to bypass or disable it. 
We discuss this threat model and evaluate {\tool}'s robustness against adaptive attacks in Section~\ref{subsec:adaptive_attack}.
Second, {\tool}'s effectiveness is modulated by the hyperparameter $\beta$, which controls the balance between the original specification and the extracted requirements during code generation. 
We discuss this hyperparameter in Section~\ref{subsec:rq3}.

\begin{tcolorbox}[width=1.0\linewidth, title={Summary of RQ1}]
{\tool} reduces average ASR to 2.16\% (VerilogEval) and 1.39\% (ResBench), outperforming all baselines by at least 10 percentage points.
\end{tcolorbox}

\begin{table*}[!t]
\centering
\caption{RQ2: Pass@1 (\%) across models and attacks. Higher is better. Green shading indicates performance above the Benign baseline, with darker shades representing larger improvements.}
\label{tab:rq2_pass}
\begin{tabular}{llccccccc}
\toprule
& & \multicolumn{7}{c}{\textbf{VerilogEval}} \\
\cmidrule(lr){3-9}
\textbf{Model} & \textbf{Attack} & \textbf{Benign} & \textbf{Poison} & \textbf{ONION-k1} & \textbf{ONION-k5} & \textbf{Back-Trans} & \textbf{Paraph.} & \textbf{Ours} \\
\midrule
\multirow{4}{*}{CodeLlama-7B} 
& BadPre      & \multirow{4}{*}{35.26} & \cellcolor{lightgreen}37.18 & 35.21 & 33.33 & \cellcolor{lightgreen}35.90 & 35.26 & \cellcolor{mediumgreen}43.18 \\
& InSent      &  & 35.26 & 33.33 & 28.85 & \cellcolor{lightgreen}38.46 & \cellcolor{lightgreen}37.82 & \cellcolor{darkgreen}47.62 \\
& RTLBreaker  &  & 35.26 & 33.33 & 27.56 & \cellcolor{lightgreen}37.82 & \cellcolor{lightgreen}36.54 & \cellcolor{mediumgreen}41.67 \\
& Avg.        &  & \cellcolor{lightgreen}35.90 & 33.96 & 29.91 & \cellcolor{lightgreen}37.39 & \cellcolor{lightgreen}36.54 & \cellcolor{mediumgreen}44.16 \\
\midrule
\multirow{4}{*}{DeepSeek-Coder-7B} 
& BadPre      & \multirow{4}{*}{42.31} & 42.31 & 41.03 & 38.46 & 42.31 & \cellcolor{lightgreen}42.95 & 42.31 \\
& InSent      &  & \cellcolor{lightgreen}43.59 & 39.10 & 34.62 & \cellcolor{lightgreen}46.15 & 39.74 & \cellcolor{lightgreen}42.95 \\
& RTLBreaker  &  & 39.74 & 37.18 & 36.54 & 41.67 & 42.31 & 42.31 \\
& Avg.        &  & 41.88 & 39.10 & 36.54 & \cellcolor{lightgreen}43.38 & 41.67 & \cellcolor{lightgreen}42.52 \\
\midrule
\multirow{4}{*}{Qwen2.5-Coder-7B} 
& BadPre      & \multirow{4}{*}{41.67} & \cellcolor{lightgreen}44.23 & \cellcolor{lightgreen}44.23 & 39.74 & \cellcolor{lightgreen}42.31 & 40.38 & \cellcolor{lightgreen}42.95 \\
& InSent      &  & \cellcolor{lightgreen}47.44 & 41.67 & 39.74 & \cellcolor{mediumgreen}50.64 & \cellcolor{lightgreen}43.59 & \cellcolor{lightgreen}46.79 \\
& RTLBreaker  &  & \cellcolor{lightgreen}46.15 & 41.67 & 39.10 & \cellcolor{lightgreen}47.44 & 41.67 & \cellcolor{lightgreen}44.23 \\
& Avg.        &  & \cellcolor{lightgreen}45.94 & \cellcolor{lightgreen}42.52 & 39.53 & \cellcolor{lightgreen}46.80 & \cellcolor{lightgreen}41.88 & \cellcolor{lightgreen}44.66 \\
\midrule
\midrule
& & \multicolumn{7}{c}{\textbf{ResBench}} \\
\cmidrule(lr){3-9}
\textbf{Model} & \textbf{Attack} & \textbf{Benign} & \textbf{Poison} & \textbf{ONION-k1} & \textbf{ONION-k5} & \textbf{Back-Trans} & \textbf{Paraph.} & \textbf{Ours} \\
\midrule
\multirow{4}{*}{CodeLlama-7B} 
& BadPre      & \multirow{4}{*}{48.21} & \cellcolor{lightgreen}53.57 & 48.21 & 48.21 & \cellcolor{lightgreen}51.79 & 39.29 & 48.21 \\
& InSent      &  & \cellcolor{lightgreen}51.79 & \cellcolor{lightgreen}51.79 & 46.43 & \cellcolor{lightgreen}51.79 & 42.86 & \cellcolor{lightgreen}53.57 \\
& RTLBreaker  &  & 48.21 & \cellcolor{lightgreen}50.00 & 42.86 & 48.21 & 48.21 & \cellcolor{lightgreen}50.00 \\
& Avg.        &  & \cellcolor{lightgreen}51.19 & \cellcolor{lightgreen}50.00 & 45.83 & \cellcolor{lightgreen}50.60 & 43.45 & \cellcolor{lightgreen}50.59 \\
\midrule
\multirow{4}{*}{DeepSeek-Coder-7B} 
& BadPre      & \multirow{4}{*}{51.79} & 51.79 & \cellcolor{lightgreen}53.57 & 46.43 & 51.79 & 48.21 & 51.79 \\
& InSent      &  & 51.79 & 51.79 & 48.21 & \cellcolor{lightgreen}55.36 & 50.00 & \cellcolor{lightgreen}53.57 \\
& RTLBreaker  &  & 46.83 & \cellcolor{lightgreen}53.57 & 50.00 & 51.79 & 51.79 & 51.79 \\
& Avg.        &  & 50.14 & \cellcolor{lightgreen}52.98 & 48.21 & \cellcolor{lightgreen}52.98 & 50.00 & \cellcolor{lightgreen}52.38 \\
\midrule
\multirow{4}{*}{Qwen2.5-Coder-7B} 
& BadPre      & \multirow{4}{*}{53.57} & \cellcolor{lightgreen}57.14 & 51.79 & \cellcolor{lightgreen}55.36 & \cellcolor{mediumgreen}60.71 & 53.57 & \cellcolor{lightgreen}58.93 \\
& InSent      &  & 51.79 & \cellcolor{lightgreen}57.14 & 53.57 & \cellcolor{lightgreen}57.14 & 50.00 & \cellcolor{lightgreen}55.36 \\
& RTLBreaker  &  & 53.57 & 53.57 & 50.00 & 51.79 & 51.79 & 51.79 \\
& Avg.        &  & \cellcolor{lightgreen}54.17 & \cellcolor{lightgreen}54.17 & 52.98 & \cellcolor{lightgreen}56.55 & 51.79 & \cellcolor{lightgreen}55.36 \\
\bottomrule
\end{tabular}
\end{table*}

\subsection{RQ2: Generation Quality}

Table~\ref{tab:rq2_pass} presents the Pass@1 results across different defense methods. 
The \textit{Benign} column represents the baseline performance where the model is trained on a clean dataset and evaluated on clean benchmarks without any triggers. 
The remaining columns represent models trained on poisoned datasets and evaluated on triggered benchmarks, with various defense methods applied during inference.

\textbf{Empirical Validation of Hypothesis~\ref{hyp:trigger_locality}.}
Table~\ref{tab:rq2_pass} provides empirical support for our core hypothesis: comparing the \textit{Benign} and \textit{Poison} columns, Pass@1 remains comparable or even improves after backdoor poisoning (e.g., Qwen2.5-Coder achieves 41.67\% on Benign versus 44.23\%, 47.44\%, and 46.15\% on BadPre, InSent, and RTLBreaker respectively).
This validates Hypothesis~\ref{hyp:trigger_locality}: if triggers modified functional requirements (e.g., bit-width or timing), testbenches would detect behavioral discrepancies, causing Pass@1 to decrease.
The stable Pass@1 supports our hypothesis that existing attacks embed triggers in non-functional requirements without altering hardware semantics.

\textbf{Analysis of ONION.} ONION-k5, while achieving reasonable defense in some scenarios (RQ1), consistently degrades generation quality. 
On VerilogEval, ONION-k5 reduces average Pass@1 by 5.35, 5.77, and 2.14 percentage points for the three models respectively.
This degradation occurs because ONION indiscriminately removes tokens based on perplexity, potentially eliminating semantically important specification details.

\textbf{Analysis of Back-Translation and Paraphrasing.} Back-Translation achieves the highest Pass@1 in several settings (e.g., 46.80\% for Qwen2.5-Coder on VerilogEval, 56.55\% on ResBench), sometimes even surpassing the Benign baseline.
However, as shown in RQ1, Back-Translation provides weak defense with average ASR of 68.59\% on VerilogEval and 48.33\% on ResBench.
Similarly, Paraphrasing occasionally improves quality but fails to provide consistent defense, particularly against RTL-Breaker attacks where semantic triggers are often preserved.

This highlights a fundamental limitation: methods that prioritize semantic preservation tend to retain triggers, while aggressive filtering methods like ONION sacrifice generation quality.

\textbf{Analysis of {\tool}.} Our method consistently maintains or improves Pass@1 performance compared to the Benign baseline across nearly all settings. 
On VerilogEval, {\tool} achieves average Pass@1 of 44.16\%, 42.52\%, and 44.66\% for CodeLlama, DeepSeek-Coder, and Qwen2.5-Coder respectively, compared to their Benign baselines of 35.26\%, 42.31\%, and 41.67\%. 
This represents improvements of +8.90, +0.21, and +2.99 percentage points.
On ResBench, {\tool} similarly achieves 50.59\%, 52.38\%, and 55.36\%, compared to Benign baselines of 48.21\%, 51.79\%, and 53.57\%, yielding improvements of +2.38, +0.59, and +1.79 percentage points.
These results demonstrate that {\tool}'s defense mechanism does not compromise the model's functional correctness, as the extracted functional requirements often provide cleaner and more focused specifications that improve code generation quality.

\begin{tcolorbox}[width=1.0\linewidth, title={Summary of RQ2}]
{\tool} preserves or improves Pass@1 across all model-benchmark combinations, achieving both strong defense and quality preservation.
\end{tcolorbox}

\begin{table*}[!t]
\centering
\caption{RQ3: Hyperparameter sensitivity analysis on VerilogEval-v2. ASR (\%) $\downarrow$ and Pass@1 (\%) $\uparrow$ across different $\beta$ values.}
\label{tab:rq3_verilogeval}
\resizebox{\textwidth}{!}{
\begin{tabular}{cl|cccc|cccc|cccc}
\toprule
& & \multicolumn{4}{c|}{\textbf{CodeLlama-7B}} & \multicolumn{4}{c|}{\textbf{DeepSeek-Coder-7B}} & \multicolumn{4}{c}{\textbf{Qwen2.5-Coder-7B}} \\
\textbf{Metric} & $\boldsymbol{\beta}$ & BadPre & InSent & RTLBreaker & Avg. & BadPre & InSent & RTLBreaker & Avg. & BadPre & InSent & RTLBreaker & Avg. \\
\midrule
\multirow{7}{*}{\rotatebox{90}{\textbf{ASR} $\downarrow$}}
& 0.0 & 88.46 & 89.10 & 89.10 & 88.89 & 87.82 & 89.74 & 82.05 & 86.54 & 94.23 & 88.46 & 93.59 & 92.09 \\
& 0.5 & 84.09 & 91.67 & 86.90 & 87.55 &  0.00 &  0.00 &  0.00 &  0.00 & 88.46 & 68.59 & 82.69 & 79.91 \\
& 1.0 & 40.15 & 32.14 & 59.52 & 43.94 &  0.00 &  0.00 &  0.00 &  0.00 & 45.51 & 12.18 &  1.92 & 19.87 \\
& \cellcolor{gray!20}1.5 & \cellcolor{gray!20}6.82 & \cellcolor{gray!20}0.00 & \cellcolor{gray!20}2.38 & \cellcolor{gray!20}3.07 & \cellcolor{gray!20}0.00 & \cellcolor{gray!20}0.00 & \cellcolor{gray!20}0.00 & \cellcolor{gray!20}0.00 & \cellcolor{gray!20}8.33 & \cellcolor{gray!20}1.92 & \cellcolor{gray!20}0.00 & \cellcolor{gray!20}3.42 \\
& 2.0 &  2.27 &  0.00 &  0.00 &  0.76 &  0.00 &  0.00 &  0.00 &  0.00 &  0.64 &  1.28 &  0.00 &  0.64 \\
& 2.5 &  0.00 &  0.00 &  0.00 &  0.00 &  0.00 &  0.00 &  0.00 &  0.00 &  0.00 &  1.28 &  0.00 &  0.43 \\
& 3.0 &  0.00 &  0.00 &  0.00 &  0.00 &  0.00 &  0.00 &  0.00 &  0.00 &  0.00 &  1.28 &  0.00 &  0.43 \\
\midrule
\multirow{7}{*}{\rotatebox{90}{\textbf{Pass@1} $\uparrow$}}
& 0.0 & 37.18 & 35.26 & 35.26 & 35.90 & 42.31 & 43.59 & 39.74 & 41.88 & 44.23 & 47.44 & 46.15 & 45.94 \\
& 0.5 & 43.18 & 42.86 & 41.67 & 42.57 & 41.67 & 42.31 & 42.95 & 42.31 & 42.31 & 46.79 & 45.51 & 44.87 \\
& 1.0 & 42.42 & 45.24 & 41.67 & 43.11 & 41.67 & 42.31 & 42.95 & 42.31 & 42.95 & 46.79 & 44.23 & 44.66 \\
& \cellcolor{gray!20}1.5 & \cellcolor{gray!20}43.18 & \cellcolor{gray!20}47.62 & \cellcolor{gray!20}41.67 & \cellcolor{gray!20}44.16 & \cellcolor{gray!20}42.31 & \cellcolor{gray!20}42.95 & \cellcolor{gray!20}42.31 & \cellcolor{gray!20}42.52 & \cellcolor{gray!20}42.95 & \cellcolor{gray!20}46.79 & \cellcolor{gray!20}44.23 & \cellcolor{gray!20}44.66 \\
& 2.0 & 43.18 & 42.86 & 42.86 & 42.97 & 41.67 & 41.67 & 41.67 & 41.67 & 44.87 & 43.59 & 41.67 & 43.38 \\
& 2.5 & 41.67 & 41.67 & 42.86 & 42.07 & 41.67 & 41.67 & 43.59 & 42.31 & 44.87 & 43.59 & 41.67 & 43.38 \\
& 3.0 & 40.91 & 44.05 & 42.86 & 42.61 & 42.95 & 41.03 & 42.59 & 42.19 & 44.87 & 45.51 & 40.38 & 43.59 \\
\bottomrule
\end{tabular}
}
\end{table*}

\begin{table*}[!t]
\centering
\caption{RQ3: Hyperparameter sensitivity analysis on ResBench. ASR (\%) $\downarrow$ and Pass@1 (\%) $\uparrow$ across different $\beta$ values.}
\label{tab:rq3_resbench}
\resizebox{\textwidth}{!}{
\begin{tabular}{cl|cccc|cccc|cccc}
\toprule
& & \multicolumn{4}{c|}{\textbf{CodeLlama-7B}} & \multicolumn{4}{c|}{\textbf{DeepSeek-Coder-7B}} & \multicolumn{4}{c}{\textbf{Qwen2.5-Coder-7B}} \\
\textbf{Metric} & $\boldsymbol{\beta}$ & BadPre & InSent & RTLBreaker & Avg. & BadPre & InSent & RTLBreaker & Avg. & BadPre & InSent & RTLBreaker & Avg. \\
\midrule
\multirow{7}{*}{\rotatebox{90}{\textbf{ASR} $\downarrow$}}
& 0.0 &  91.07 & 91.07 & 92.86 & 91.67 & 94.64 & 94.64 & 89.29 & 92.86 & 100.00 & 83.93 & 98.21 & 94.05 \\
& 0.5 &  83.93 & 89.29 & 92.86 & 88.69 &  0.00 &  0.00 &  0.00 &  0.00 &  76.79 & 50.00 & 58.93 & 61.91 \\
& 1.0 &  41.07 & 19.64 & 42.86 & 34.52 &  0.00 &  0.00 &  0.00 &  0.00 &  26.79 & 16.07 &  0.00 & 14.29 \\
& \cellcolor{gray!20}1.5 & \cellcolor{gray!20}10.71 & \cellcolor{gray!20}0.00 & \cellcolor{gray!20}3.57 & \cellcolor{gray!20}4.76 & \cellcolor{gray!20}0.00 & \cellcolor{gray!20}0.00 & \cellcolor{gray!20}0.00 & \cellcolor{gray!20}0.00 & \cellcolor{gray!20}3.57 & \cellcolor{gray!20}5.36 & \cellcolor{gray!20}0.00 & \cellcolor{gray!20}2.98 \\
& 2.0 &   5.36 &  0.00 &  0.00 &  1.79 &  0.00 &  0.00 &  0.00 &  0.00 &   0.00 &  0.00 &  0.00 &  0.00 \\
& 2.5 &   3.57 &  0.00 &  0.00 &  1.19 &  0.00 &  0.00 &  0.00 &  0.00 &   0.00 &  0.00 &  0.00 &  0.00 \\
& 3.0 &   3.57 &  0.00 &  0.00 &  1.19 &  0.00 &  0.00 &  0.00 &  0.00 &   0.00 &  0.00 &  0.00 &  0.00 \\
\midrule
\multirow{7}{*}{\rotatebox{90}{\textbf{Pass@1} $\uparrow$}}
& 0.0 & 53.57 & 51.79 & 48.21 & 51.19 & 51.79 & 51.79 & 46.83 & 50.14 & 57.14 & 51.79 & 53.57 & 54.17 \\
& 0.5 & 51.79 & 50.00 & 46.43 & 49.41 & 53.57 & 50.00 & 50.00 & 51.19 & 57.14 & 57.14 & 53.57 & 55.95 \\
& 1.0 & 51.79 & 51.79 & 48.21 & 50.60 & 51.79 & 53.57 & 53.57 & 52.98 & 60.71 & 53.57 & 51.79 & 55.36 \\
& \cellcolor{gray!20}1.5 & \cellcolor{gray!20}48.21 & \cellcolor{gray!20}51.79 & \cellcolor{gray!20}50.00 & \cellcolor{gray!20}50.00 & \cellcolor{gray!20}51.79 & \cellcolor{gray!20}53.57 & \cellcolor{gray!20}51.79 & \cellcolor{gray!20}52.38 & \cellcolor{gray!20}58.93 & \cellcolor{gray!20}55.36 & \cellcolor{gray!20}51.79 & \cellcolor{gray!20}55.36 \\
& 2.0 & 46.43 & 51.79 & 50.00 & 49.41 & 55.36 & 53.57 & 50.00 & 52.98 & 58.93 & 53.57 & 50.00 & 54.17 \\
& 2.5 & 48.21 & 53.57 & 50.00 & 50.59 & 53.57 & 51.79 & 48.21 & 51.19 & 58.93 & 53.57 & 53.57 & 55.36 \\
& 3.0 & 48.21 & 50.00 & 51.79 & 50.00 & 55.36 & 51.79 & 50.00 & 52.38 & 57.14 & 53.57 & 53.57 & 54.76 \\
\bottomrule
\end{tabular}
}
\end{table*}

\subsection{RQ3: Hyperparameter Sensitivity}
\label{subsec:rq3}

Tables~\ref{tab:rq3_verilogeval} and~\ref{tab:rq3_resbench} present comprehensive results across seven $\beta$ values (0.0 to 3.0) on VerilogEval-v2 and ResBench, covering 3 models $\times$ 3 attacks $\times$ 7 $\beta$ values = 63 configurations per benchmark.

\textbf{Effect of $\beta$ on Defense.}
The hyperparameter $\beta$ controls the weight of extracted functional requirements relative to the original specification during decoding.
When $\beta = 0$, {\tool} degenerates to standard generation without defense, yielding ASRs of 86--92\% across all settings.
As $\beta$ increases, the defense strengthens progressively.
Notably, model sensitivity to $\beta$ varies significantly:
DeepSeek-Coder achieves complete defense (0.00\% ASR) at $\beta \geq 0.5$ across all attacks and benchmarks, demonstrating high responsiveness to the extracted requirements.
In contrast, CodeLlama and Qwen2.5-Coder require $\beta \geq 1.5$ to achieve near-zero ASR, with CodeLlama showing residual vulnerability on BadPre (6.82\% on VerilogEval, 10.71\% on ResBench at $\beta = 1.5$).
At $\beta \geq 2.5$, all models converge to minimal ASR ($\leq$1.28\%).

\textbf{Effect of $\beta$ on Generation Quality.}
Interestingly, moderate $\beta$ values (1.0--1.5) often \textit{improve} Pass@1 compared to $\beta = 0$.
On VerilogEval, CodeLlama's average Pass@1 increases from 35.90\% ($\beta = 0$) to 44.16\% ($\beta = 1.5$), an improvement of 8.26 percentage points.
This suggests that the extracted functional requirements provide cleaner specifications that benefit code generation.
However, excessively high $\beta$ values ($\geq$2.5) may slightly reduce Pass@1 as the model relies too heavily on potentially incomplete extracted requirements.
For instance, CodeLlama's Pass@1 drops from 44.16\% ($\beta = 1.5$) to 42.07\% ($\beta = 2.5$) on VerilogEval.

\textbf{Optimal Trade-off.}
Based on these results, we select $\beta = 1.5$ as the default configuration, which achieves:
(1) near-zero ASR across most settings (average 3.07\% on VerilogEval and 2.57\% on ResBench),
(2) preserved or improved Pass@1 compared to undefended generation, and
(3) consistent performance across different models and attack types.
For security-critical applications requiring absolute defense, $\beta = 2.5$ can be used with minimal quality degradation.

\begin{tcolorbox}[width=1.0\linewidth, title={Summary of RQ3}]
The default $\beta=1.5$ achieves optimal security-utility trade-off, with higher values providing stronger defense and lower values preserving more generation quality.
\end{tcolorbox}
\section{Discussions}
\label{sec:discussion}

\begin{figure}[t]
\centering
\includegraphics[width=\columnwidth]{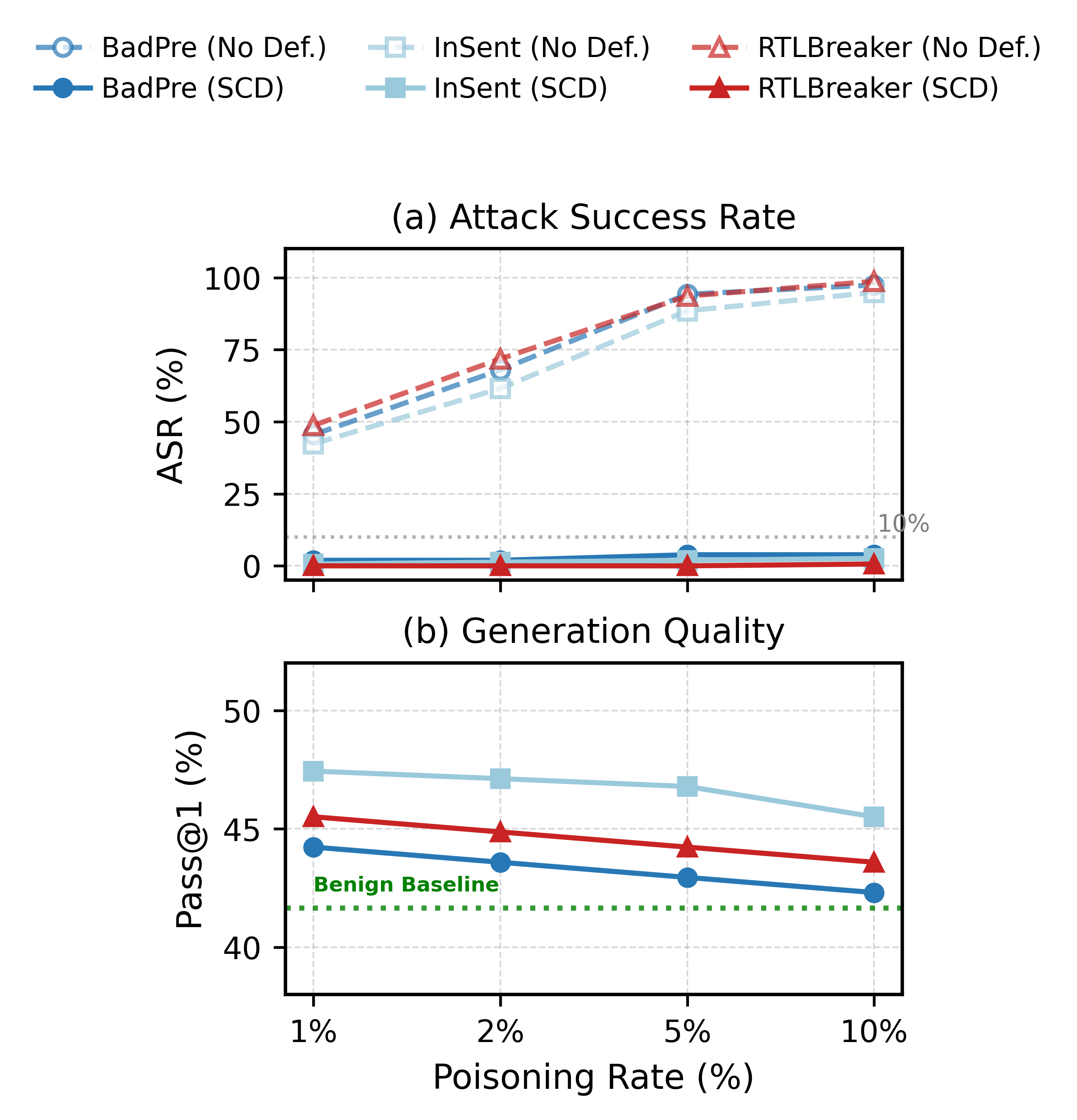}
\caption{Impact of poisoning rate on ASR and Pass@1 for Qwen2.5-Coder-7B on VerilogEval-v2. {\tool} maintains robust defense across all poisoning rates while preserving Pass@1 above the Benign baseline.}
\label{fig:poison_rate}
\vspace{-0.5cm}
\end{figure}

\subsection{Impact of Poisoning Rate}
\label{subsec:poison_rate}

We investigate how the poisoning rate $\rho$ affects both attack effectiveness and {\tool}'s defense capability. 
Figure~\ref{fig:poison_rate} presents results for Qwen2.5-Coder 7B on VerilogEval-v2 at four poisoning rates: 1\%, 2\%, 5\% (default), and 10\%.

\textbf{Attack Strength vs. Poisoning Rate.}
Without defense, ASR increases monotonically with $\rho$. 
At $\rho=1\%$, the backdoor is weakly learned, yielding average ASR of 45.51\%.
As $\rho$ increases to 10\%, average ASR reaches 97.01\%, indicating nearly complete backdoor implantation.
This trend aligns with existing findings that higher poisoning rates strengthen backdoor learning~\cite{yang2025defending}.

\textbf{{\tool} Maintains Robust Defense.}
Despite the significant variation in attack strength, {\tool} consistently suppresses ASR to low levels across all poisoning rates.
At $\rho=1\%$, {\tool} achieves 1.50\% average ASR; at $\rho=10\%$, ASR remains at only 4.70\%.
This represents an average reduction of 89--92 percentage points compared to no defense.
The slight increase in ASR at higher $\rho$ (from 1.50\% to 4.70\%) can be attributed to stronger backdoor associations that may partially survive the contrastive decoding process.
Nevertheless, even at $\rho=10\%$, {\tool} maintains $<$5\% ASR, demonstrating robust defense regardless of poisoning intensity.
Pass@1 under {\tool} remains stable across poisoning rates, with standard deviation $<$2.0 percentage points for all attack types.
At $\rho=10\%$, average Pass@1 (43.80\%) still exceeds the Benign baseline (41.67\%), indicating that {\tool}'s defense mechanism does not introduce quality degradation even under aggressive poisoning.

These results demonstrate that {\tool} is effective across a realistic range of poisoning rates.
Since defenders typically do not know the attacker's poisoning rate, this robustness is crucial for real-world deployment.
{\tool}'s inference-time defense mechanism operates independently of training-time poisoning, providing consistent protection regardless of how the backdoor was implanted.

\begin{figure}[t]
\centering
\includegraphics[width=\columnwidth]{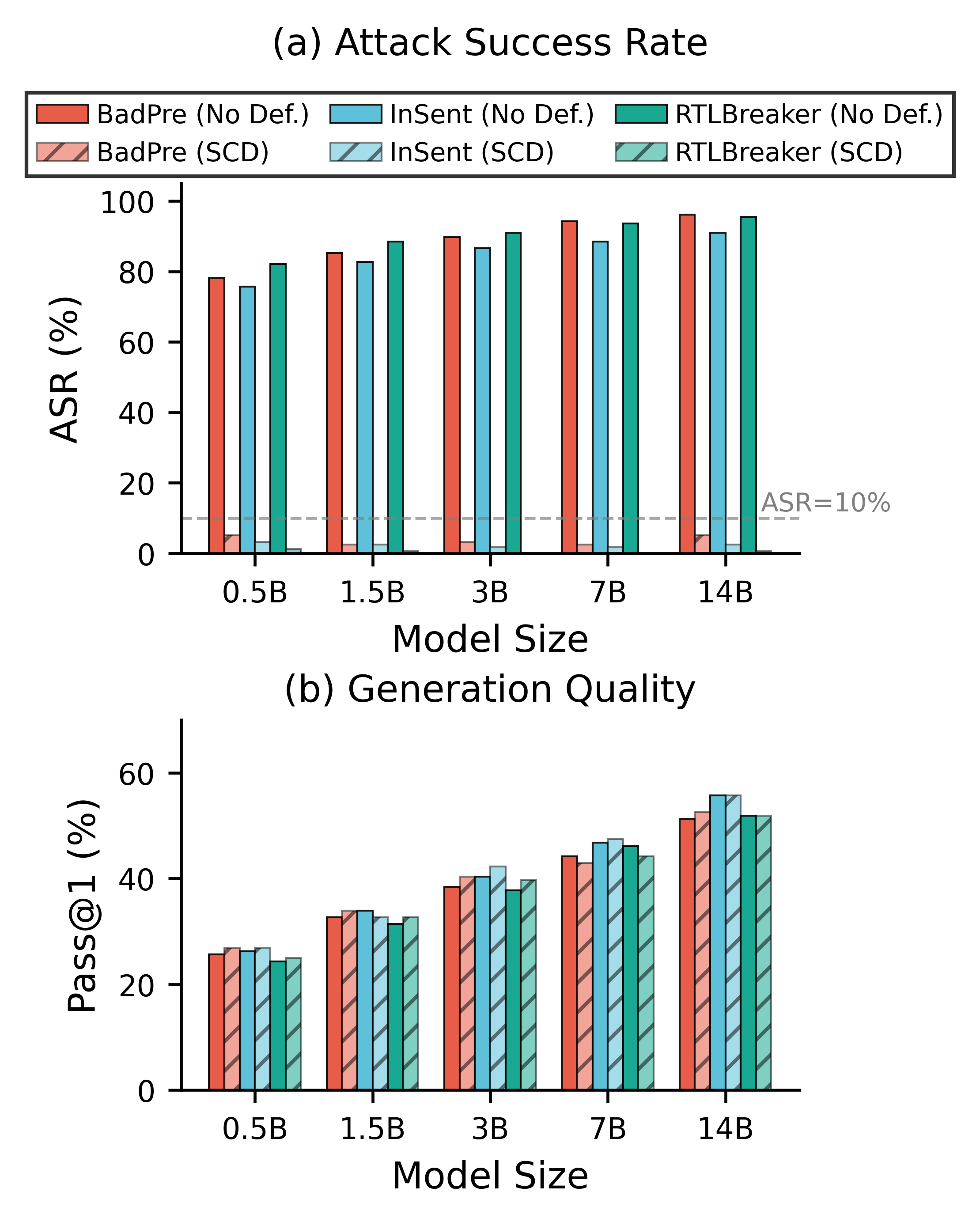}
\caption{Impact of model scale on attack success rate (top) and generation quality (bottom) using Qwen2.5-Coder models (0.5B--14B). Solid bars: no defense; hatched bars: with {\tool}. {\tool} maintains robust defense ($<$10\% ASR) across all scales while preserving generation quality.}
\label{fig:model_scale}
\vspace{-0.5cm}
\end{figure}

\subsection{Impact of Model Scale}
We investigate how model scale affects both attack strength and defense effectiveness using Qwen2.5-Coder models ranging from 0.5B to 14B parameters, as shown in Figure~\ref{fig:model_scale}.

\textbf{Attack Strength Scales with Model Size.}
Without defense, larger models exhibit higher ASR across all attack types. 
The average ASR increases from 78.63\% (0.5B) to 94.23\% (14B), indicating that larger models have greater capacity to memorize backdoor patterns during poisoned fine-tuning.
This trend is consistent across all three attacks, with BadPre showing the steepest increase (78.21\% $\rightarrow$ 96.15\%).

\textbf{{\tool} Maintains Robust Defense Across Scales.}
Despite the increased attack strength in larger models, {\tool} consistently reduces ASR to below 10\% across all model sizes.
The average ASR under {\tool} ranges from 3.21\% (0.5B) to 4.27\% (14B), demonstrating that {\tool}'s defense mechanism.
Notably, RTLBreaker achieves near-zero ASR (0.00\%--1.28\%) across all scales, as its semantic trigger ``security'' is reliably filtered during requirement extraction.
As expected, Pass@1 increases monotonically with model size for both defended and undefended settings.
Importantly, {\tool} does not degrade generation quality at any scale; in fact, it provides slight improvements in most cases.
This confirms that the extracted functional requirements serve as effective and clean specifications that benefit code generation.

These results suggest that {\tool} is a scalable defense solution: as organizations adopt larger code models for Verilog generation, {\tool} can provide consistent protection without requiring retraining or architecture modifications.
The defense overhead remains constant regardless of model size, as only the lightweight extractor (0.6B parameters) needs to process the input.

\subsection{Adaptive Attack Discussion}
\label{subsec:adaptive_attack}

A potential concern is whether an attacker aware of the extractor's deployment could craft adaptive attacks to bypass it.
We analyze this threat and discuss SCD's positioning within a comprehensive defense strategy.

\textbf{Potential Adaptive Strategies.}
An adaptive attacker may attempt to embed triggers that survive the extraction process through two strategies:
(1) \textit{Functional-disguised triggers}: crafting triggers that appear to be legitimate functional requirements (e.g., ``add a debug output port'' or ``handle edge case when input equals 0xDEAD''), which the extractor may preserve as genuine specifications;
(2) \textit{Extraction-aware triggers}: designing triggers that exploit the extractor's training distribution, such as using phrases that frequently appear in legitimate functional requirements.

We acknowledge that such adaptive attacks are theoretically possible. However, as discussed in Hypothesis~\ref{hyp:trigger_locality}, functional triggers face practical constraints including the rarity-frequency paradox and input distribution uncertainty, which limit their effectiveness in real-world scenarios.

\textbf{Extreme Defense: Header-Only Generation.}
To understand the fundamental limits of defense, we consider an extreme approach: completely discarding the user's natural language specification and generating code based solely on the \texttt{module\_header} (i.e., module name and I/O port declarations).
Under this setting, no trigger can survive because all user-provided textual content is discarded.

\begin{table}[!t]
\centering
\caption{Header-Only defense analysis on VerilogEval-v2 using Qwen2.5-Coder-7B. Higher $\beta$ increases reliance on module header only, achieving zero ASR but with significant Pass@1 degradation.}
\label{tab:adaptive_defense}
\begin{tabular}{clcccc}
\toprule
& & \multicolumn{4}{c}{\textbf{Qwen2.5-Coder-7B}} \\
\cmidrule(lr){3-6}
\textbf{Metric} & $\boldsymbol{\beta}$ & BadPre & InSent & RTLBreaker & Avg. \\
\midrule
\multirow{7}{*}{\rotatebox{90}{\textbf{ASR} $\downarrow$}}
& 0.0 & 94.23 & 88.46 & 93.59 & 92.09 \\
& 0.5 & 69.87 & 32.69 & 62.18 & 54.91 \\
& 1.0 & 12.18 &  0.00 &  1.28 &  4.49 \\
& 1.5 &  0.00 &  0.00 &  0.00 &  0.00 \\
& 2.0 &  0.00 &  0.00 &  0.00 &  0.00 \\
& 2.5 &  0.00 &  0.00 &  0.00 &  0.00 \\
& 3.0 &  0.00 &  0.00 &  0.00 &  0.00 \\
\midrule
\multirow{7}{*}{\rotatebox{90}{\textbf{Pass@1} $\uparrow$}}
& 0.0 & 44.23 & 47.44 & 46.15 & 45.94 \\
& 0.5 & 31.41 & 36.54 & 35.90 & 34.62 \\
& 1.0 & 23.08 & 23.08 & 22.44 & 22.87 \\
& 1.5 & 16.03 & 18.59 & 16.67 & 17.10 \\
& 2.0 & 12.82 & 16.03 & 17.31 & 15.39 \\
& 2.5 & 12.82 & 14.00 & 16.67 & 14.50 \\
& 3.0 & 13.46 & 12.82 & 16.67 & 14.32 \\
\bottomrule
\end{tabular}
\end{table}

Table~\ref{tab:adaptive_defense} presents this Header-Only defense analysis.
At $\beta=0$ (no defense), the model uses the complete user specification, yielding 92.09\% average ASR and 45.94\% Pass@1.
As $\beta$ increases, the model progressively relies more heavily on the module header alone.
Header-Only achieves complete defense (0.00\% ASR) at $\beta \geq 1.5$, confirming that triggers cannot survive when user specifications are completely removed.

However, this perfect defense comes at a severe cost: Pass@1 drops from 45.94\% to 17.10\%, a degradation of 28.84 percentage points.
This dramatic quality loss occurs because the module header alone provides insufficient information to infer the intended functionality.
Many Verilog tasks share similar I/O declarations but implement vastly different logic (e.g., an adder and a multiplier may have identical port structures).

\textbf{SCD as a First Line of Defense.}
The above analysis reveals a fundamental trade-off: aggressive input sanitization achieves perfect security but destroys utility.
SCD navigates this trade-off by selectively filtering non-functional content while preserving functional specifications.

We position {\tool} as a \textit{first line of defense} that effectively addresses the predominant class of backdoor attacks, specifically those exploiting non-functional requirements as evidenced by BadPre, InSent, RTL-Breaker, and other existing attacks in the literature.
For adaptive attacks that embed triggers within functional requirements, the extractor may preserve them as legitimate specifications.
In such cases, complementary defenses become necessary.

For safety-critical hardware applications, we recommend a defense-in-depth strategy:
\begin{itemize}
    \item \textbf{Layer 1 (Input)}: {\tool} for inference-time trigger suppression
    \item \textbf{Layer 2 (Output)}: Testbench validation for functional correctness
    \item \textbf{Layer 3 (Post-synthesis)}: Formal verification and hardware trojan detection~\cite{yasaei2022hardware, alrahis2023tt}
\end{itemize}
This layered approach provides comprehensive protection: {\tool} handles the common case efficiently, while complementary techniques address edge cases that may bypass extraction-based filtering.

\subsection{Extractor Analysis}
\label{subsec:extractor_analysis}

Another potential concern is whether the functional requirement extractor generalizes beyond its training distribution and maintains generalization ability under diverse inputs.
We address this through both quantitative evaluation and qualitative analysis.

\textbf{Extraction Quality Evaluation.}
We randomly sample 100 specifications from VerilogEval-v2 and ResBench (50 each) and recruit two professional Verilog engineers to evaluate extraction quality.
Each extraction is rated as: (1) \textit{Complete}: all functional requirements preserved with non-functional content removed; (2) \textit{Partial}: minor functional information loss or residual non-functional content; (3) \textit{Failed}: significant functional information loss affecting code generation.
Table~\ref{tab:extraction_quality} summarizes the results.

\begin{table}[!t]
\centering
\caption{Human evaluation of extraction quality on 100 sampled specifications.}
\label{tab:extraction_quality}
\begin{tabular}{lccc}
\toprule
\textbf{Benchmark} & \textbf{Complete} & \textbf{Partial} & \textbf{Failed} \\
\midrule
VerilogEval-v2 & 82\% & 14\% & 4\% \\
ResBench & 78\% & 20\% & 2\% \\
\midrule
Overall & 80\% & 17\% & 3\% \\
\bottomrule
\end{tabular}
\end{table}

The extractor achieves 80\% complete extraction rate, with only 3\% failed cases.
Manual inspection of failed cases reveals two primary causes: (1) ambiguous boundary between functional and non-functional content (e.g., ``efficient implementation'' may imply specific optimization constraints), and (2) domain-specific terminology absent from training data.
Importantly, even partial extractions retain sufficient functional information for {\tool} to generate correct code, as evidenced by the stable Pass@1 in RQ2.

\subsection{Cost Analysis}
\label{subsec:time_cost}

We analyze the computational overhead introduced by {\tool}, which consists of two components: (1) functional requirement extraction and (2) contrastive decoding.

\textbf{Extraction Overhead.}
The functional requirement extractor is a fine-tuned Qwen3-0.6B model with only 0.6 billion parameters.
For typical Verilog specifications (100--500 tokens), extraction completes in 50--200ms on an NVIDIA RTX 4090 GPU.
This overhead is negligible compared to the code generation phase, as the extractor runs only once per input and produces a concise output (typically 30--200 tokens).

\textbf{Decoding Overhead.}
{\tool} requires two forward passes through the code generation model at each decoding step: one conditioned on the full specification and one on the extracted requirements.
Naively, this would double the inference time compared to standard decoding.
However, our implementation incorporates several optimizations including fused CUDA operators, FlashAttention-2~\cite{daoflashattention}, and KV-Cache reuse for the shared prefix.
With these optimizations, generating 512 tokens with a 7B-parameter model on an NVIDIA RTX 4090 GPU takes approximately 1.5 seconds per sample.
In practice, processing the entire VerilogEval-v2 benchmark (156 samples) completes within 4 minutes, demonstrating that {\tool} is efficient enough for practical deployment.

\subsection{Threats to Validity}
In this subsection, we analyze potential threats to the validity of our empirical study.

\noindent\textbf{Threats to Internal Validity.}
First, implementation faults in {\tool} or baseline defenses may affect the results. To mitigate this threat, we carefully review our code and keep the experimental pipeline consistent across all methods.
Second, the trigger insertion and trojan detection procedures may introduce bias. We follow prior work for trigger construction and explicitly define trojan patterns to ensure consistent ASR computation.

\noindent\textbf{Threats to External Validity.}
Our evaluation focuses on three representative code LLMs and two Verilog benchmarks, which may not cover all model families or industrial design tasks.
Moreover, real-world specifications can be longer and noisier than benchmark prompts. Future work should evaluate {\tool} on additional models, more diverse RTL tasks, and real design specifications.

\noindent\textbf{Threats to Construct Validity.}
We use ASR to measure security effectiveness and Pass@1 to measure functional correctness via testbench simulation.
However, passing testbenches does not guarantee full functional correctness, and it does not capture post-synthesis security properties (e.g., information leakage through preserved registers).
In addition, our trojan detector targets specific patterns and may miss unseen trojan forms. Complementary metrics such as formal properties and human review are promising to strengthen construct validity.

\noindent\textbf{Threats to Extractor Generalization.}
The functional requirement extractor is trained on RTL-Coder specifications, which may not fully represent the diversity of real-world hardware specifications.
Industrial specifications often contain domain-specific terminology, company-internal conventions, or multi-language descriptions that differ from the training distribution.
While our results on VerilogEval-v2 and ResBench (both unseen during extractor training) demonstrate reasonable generalization, the extractor's performance on significantly different specification styles (e.g., automotive or aerospace RTL) remains to be validated.
Future work could explore continual learning or few-shot adaptation to enhance cross-domain robustness.
\section{Related Work}
\label{sec:related}

\subsection{Verilog Code Generation}
Recent years have seen increasing interest in using LLMs for Verilog/RTL code generation.
Early studies benchmarked general-purpose LLMs on RTL synthesis tasks and highlighted common failure modes in functional correctness~\cite{thakur2023benchmarking}.
To support systematic evaluation, Liu et al. proposed VerilogEval as a benchmark suite for Verilog code generation~\cite{liu2023verilogeval}, while Lu et al. released RTLLM as an open-source benchmark for RTL generation~\cite{lu2024rtllm}.

Beyond benchmarks, the community has explored diverse strategies to improve RTL generation quality.
Blocklove et al. introduced Chip-Chat, demonstrating conversational LLM-assisted hardware design through iterative refinement~\cite{blocklove2023chip}.
Chang et al. proposed ChipGPT, which leverages ChatGPT for logic synthesis via natural language interactions~\cite{chang2024chipgpt}.
Liu et al. developed AutoChip, combining LLMs with automated feedback from EDA tools to iteratively improve generated Verilog~\cite{liu2024autochip}.
At an industrial scale, NVIDIA introduced ChipNeMo, a domain-adapted LLM for chip design that demonstrates significant improvements in engineering assistant chatbots, EDA script generation, and bug summarization~\cite{liu2024chipnemo}.
Pei et al. proposed BetterV, which uses discriminator-guided decoding to improve Verilog code generation quality~\cite{zehua2024betterv}.

Further, the community has developed specialized models and datasets for Verilog code generation.
Thakur et al. introduced VeriGen, a large language model tailored for Verilog code generation~\cite{thakur2024verigen}, and Liu et al. presented RTL-Coder as an open-source, efficient RTL generation technique and dataset~\cite{liu2024rtlcoder}.
Given the importance of functional correctness for hardware, Wei et al. proposed VeriCoder to enhance LLM-based RTL code generation by validating functional correctness~\cite{wei2025vericoder}.
More recently, researchers have explored formal verification and advanced reasoning for RTL generation.
Orenes-Vera et al. studied using LLMs to generate SystemVerilog assertions for formal verification~\cite{orenes2023using}.
Tsai et al. proposed RTLFixer, which uses LLMs to automatically repair syntax errors in generated RTL code~\cite{tsai2024rtlfixer}.
Yan et al. introduced AssertLLM, which leverages LLMs to generate assertions for hardware verification~\cite{yan2025assertllm}.
These works collectively establish the task setting, benchmarks, and practical pipelines for LLM-based RTL synthesis.

\subsection{Backdoor Attacks and Defenses for (Code) Language Models}
Backdoor learning has been widely studied in machine learning~\cite{li2022backdoor}.
For NLP models, Dai et al. demonstrated early backdoor attacks against LSTM-based text classification~\cite{dai2019backdoor}.
In the foundation-model era, Chen et al. proposed BadPre, showing that poisoned pretraining can implant task-agnostic backdoors into pretrained NLP models~\cite{chenbadpre}, and Kurita et al. studied weight poisoning attacks on pretrained models~\cite{kurita2020weight}.
Beyond explicit trigger insertion, researchers have developed increasingly stealthy attack strategies.
Qi et al. proposed syntactic triggers that exploit sentence structure rather than specific words~\cite{qi2021hidden}.
Yang et al. studied clean-label backdoor attacks where poisoned samples maintain correct labels~\cite{yang2021rethinking}.
Yan et al. proposed Bite, a textual backdoor attack via bit-level perturbations that evades detection~\cite{yan2023bite}.
For instruction-tuned LLMs, Xu et al. demonstrated instruction backdoors that can be triggered through specific instruction patterns~\cite{xu2024instructions}.

These attacks motivate practical defenses that do not require access to the original training data.
ONION is a representative input-sanitization method that detects and removes suspicious tokens using a language model~\cite{qi2021onion}.
Classic backdoor defenses in the broader DNN literature include Neural Cleanse for detection and mitigation of backdoored models~\cite{wang2019neural} and STRIP for run-time trojan input detection~\cite{gao2019strip}.
Several other defense strategies have been proposed.
Liu et al. introduced fine-pruning, which combines pruning with fine-tuning to remove backdoors while preserving model utility~\cite{liu2018fine}.
Tran et al. proposed spectral signature detection, leveraging the spectral properties of learned representations to identify poisoned samples~\cite{tran2018spectral}.
More broadly, prompt rewriting and other black-box defenses have been explored as practical baselines for aligned language models~\cite{jain2023baseline}.

Backdoor risks also arise in code language models.
Qu et al. studied backdoor attacks against prompt engineering for code generation~\cite{qu2025badcodeprompt}.
Schuster et al. demonstrated that code suggestion models can be poisoned to suggest insecure code patterns~\cite{schuster2021you}.
Ramakrishnan et al. studied semantic backdoors in neural code models that are triggered by specific code semantics~\cite{ramakrishnan2022backdoors}.
Li et al. proposed CodePoisoner, a systematic framework offer three rule-based strategies to perform backdoor attacks on code models~\cite{li2024poison}.
Li et al. studied backdoor attacks on code search models~\cite{li2023multi}.
Wan et al. studied backdoor attacks on code generation through poisoning code-comment pairs~\cite{wan2022you}.

On the defense side, Yang et al. proposed a training-time method based on deceptive cross-entropy loss to defend code language models against backdoor attacks~\cite{yang2025defending}.
In contrast, our work focuses on an inference-time defense tailored to Verilog code generation, motivated by the high-stakes and hard-to-patch nature of hardware vulnerabilities.

\subsection{Hardware Trojans and Hardware Security Context}
Hardware trojans are particularly concerning because malicious circuitry may survive functional verification and become costly or impossible to remediate after fabrication.
Beyond RTL generation, hardware security has also investigated automated detection techniques.
For example, Yasaei et al. proposed using graph neural networks for hardware trojan detection~\cite{yasaei2022hardware}, and Alrahis et al. studied backdoor attacks on GNN-based hardware security systems~\cite{alrahis2023tt}.
These lines of work emphasize the need for robust, practical defenses when deploying ML/LLM techniques in hardware design flows.

Finally, closely related to our setting, Mankali et al. proposed RTL-Breaker to assess the security of LLMs against backdoor attacks on HDL code generation~\cite{mankali2025rtl}.
Our study complements RTL-Breaker by proposing an inference-time defense mechanism and systematically evaluating its effectiveness across multiple LLMs and trigger types.
\section{Conclusion}
\label{sec:conclusion}

This paper addresses backdoor threats in LLM-based Verilog code generation, where malicious triggers can induce hardware trojans that are irreversible once fabricated.
We hypothesize that practical constraints strongly bias attackers toward embedding triggers in non-functional content, and propose {\tool}, an inference-time defense requiring no model retraining.
{\tool} extracts functional requirements from specifications and performs contrastive decoding to suppress trigger-activated outputs while preserving functional intent.

Experiments on three code LLMs and two benchmarks demonstrate that {\tool} reduces average ASR from 89\% to under 2\% while maintaining or improving generation quality.
{\tool} remains robust across varying poisoning rates (1\%--10\%) and model scales (0.5B--14B), establishing it as an effective first line of defense for Verilog code generation.

Future work includes extending {\tool} to defend against adaptive attacks that embed triggers within functional requirements, applying it to other hardware description languages, and integrating synthesis-time verification for comprehensive hardware supply chain protection.

\section*{Acknowledgements}

This work was supported by National Key R\&D Program of China (No. 2024YFB4506400).
The authors would like to thank the editors and the anonymous reviewers for their insightful comments and suggestions, which can substantially improve the quality of this work.
\bibliographystyle{IEEEtran}
\bibliography{main}

\end{document}